\def\mmmddyyyy{\ifcase\month\or Jan\or Feb\or Mar\or Apr\or May\or
Jun\or Jul\or Aug\or Sep\or Oct\or Nov\or Dec\fi \space\number\day,
\number\year}
\def\hhmm{\ifnum\hour<10 0\fi\number\hour :%
  \ifnum\minutes<10 0\fi\number\minutes} 
\begin{document}

\title{Dichotomy Results for Fixed Point Counting in Boolean Dynamical Systems\thanks{A preliminary version of this paper \cite{homan-kosub-2007:proc} was presented
at the 10th Italian Conference on Theoretical Computer Science (ICTCS'07).}}

\author{%
{\em Christopher M. Homan}\\
Department of Computer Science\\
Rochester Institute of Technology\\
Rochester, NY 14623, USA\\
\texttt{cmh@cs.rit.edu}
\and
{\em Sven Kosub} \\
Department of Computer \& Information Science\\
University of Konstanz\\
Box 67, D-78457 Konstanz, Germany\\
\texttt{Sven.Kosub@uni-konstanz.de}
}

\renewcommand{\P}{\ensuremath{ {\rm P}}}
\newcommand{{\F}}{{\cal F}}

\newcommand{\co}{\ensuremath{ {\rm co}}}
\newcommand{\NP}{\ensuremath{ {\rm NP}}}
\newcommand{\FPT}{\ensuremath{ {\rm FPT}}}
\newcommand{\W}{\ensuremath{ {\rm W}}}
\newcommand{\PSPACE}{\ensuremath{ {\rm PSPACE}}}
\newcommand{\defin}{\in}

\newtheorem{definition}{Definition}
\newtheorem{theorem}[definition]{Theorem}
\newtheorem{corollary}[definition]{Corollary}
\newtheorem{lemma}[definition]{Lemma}
\newtheorem{proposition}[definition]{Proposition}
\newtheorem{conjecture}[definition]{Conjecture}
\newtheorem{remark}[definition]{Remark}
\newtheorem{example}[definition]{Example}
\newtheorem{observation}[definition]{Observation}

\newcommand{{\Nat}}{{\rm I\! N}}
\newcommand{{\Real}}{{\rm I\! R}}
\newcommand{{\BF}}{{\rm BF}}
\newcommand{{\Forb}}{{\rm Forb}}
\newcommand{{\AllGraphs}}{{\rm AllGraphs}}
\newcommand{{\pot}}{{\cal P}}
\newcommand{{\D}}{{\cal D}}
\newcommand{{\BFD}}{{\rm D}}
\newcommand{{\BFR}}{{\rm R}}
\newcommand{{\BFM}}{{\rm M}}
\newcommand{{\BFN}}{{\rm N}}
\newcommand{{\BFL}}{{\rm L}}
\newcommand{{\BFE}}{{\rm E}}
\newcommand{{\BFV}}{{\rm V}}
\newcommand{{\BFS}}{{\rm S}}
\newcommand{{\pow}}{{\cal P}}
\newcommand{{\XA}}{{\cal X}}
\newcommand{{\XB}}{{\cal Y}}
\newcommand{{\XC}}{{\cal C}}
\newcommand{{\XD}}{{\cal D}}
\newcommand{{\XS}}{{\cal S}}
\newcommand{{\XG}}{{\cal G}}
\newcommand{{\XU}}{{\cal U}}
\newcommand{{\XI}}{{\cal I}}
\newcommand{{\XX}}{{\cal X}}
\newcommand{{\XV}}{{\cal V}}
\newcommand{{\XF}}{{\bf F}}

\newcommand{{\Proj}}{{\mathrm{proj}}}
\newcommand{{\Exp}}{{\mathrm{exp}}}
\newcommand{{\df}}{{\mathrm{def}}}
\newcommand{{\nt}}{{\mathrm{not}}}
\newcommand{{\tw}}{{\mathrm{tw}}}
\newcommand{{\id}}{{\mathrm{id}}}
\newcommand{{\argmax}}{{\mathrm{arg max}}}
\newcommand{{\rank}}{{\mathrm{rank}}}
\newcommand{{\sd}}{{\mathrm{sd}}}
\newcommand{{\et}}{{\mathrm{et}}}
\newcommand{{\vel}}{{\mathrm{vel}}}
\newcommand{{\dual}}{{\mathrm{dual}}}
\newcommand{{\CSP}}{{\mathrm{CSP}}}
\newcommand{{\lookup}}{{\mathrm{T}}}
\newcommand{{\formula}}{{\mathrm{F}}}
\newcommand{{\circuit}}{{\mathrm{C}}}

\newcommand{{\FPE}}{{\textsc{FixedPoints}}}
\newcommand{{\SAT}}{{\textsc{3SAT}}}
\newcommand{{\PSAT}}{{\textsc{Planar 3SAT}}}
\newcommand{{\PHSAT}}{{\textsc{Planar Horn-2SAT}}}
\newcommand{{\POSSAT}}{{\textsc{Pos 2SAT}}}

\date{\empty}

\maketitle

\begin{abstract}
We present dichotomy theorems regarding the computational complexity of counting
fixed points in boolean (discrete) dynamical systems, i.e., finite discrete 
dynamical systems over the domain $\{0,1\}$.
For a class $\F$ of boolean functions and a class $\XG$ of graphs, an $(\F,
\XG)$-system is a boolean dynamical system with local transitions functions 
lying in $\F$ and graphs in $\XG$. 
We show that, if local transition functions are given by lookup tables, then
the following complexity classification holds:
Let $\F$ be a class of boolean functions closed under superposition and let 
$\XG$ be a graph class closed under taking minors. 
If $\F$ contains all $\min$-functions, all $\max$-functions, or all self-dual 
and monotone functions, and $\XG$ contains all planar graphs, then it is 
$\#\P$-complete to compute the number of fixed points in an 
$(\F,\XG)$-system; otherwise it is computable in polynomial time. 
We also prove a dichotomy theorem for the case that local 
transition functions are given by formulas (over logical bases).
This theorem has a significantly
more complicated structure than the theorem for lookup tables.
A corresponding theorem 
for boolean circuits coincides with the theorem for formulas. 

\medskip
\noindent
{\bf Keywords:} Discrete dynamical systems, fixed point, algorithms and complexity.
\end{abstract}


\section{Introduction}

Efforts to understand the behavior of complex systems have led to various models 
for finite discrete dynamical systems, including 
(finite) cellular automata (see, e.g., \cite{neumann-1966,wolfram-1994}), 
discrete recurrent Hopfield networks (see, e.g., \cite{hopfield-1982,bar-yam-2003}), and 
concurrent and communicating finite state machines (see, e.g., \cite{milner-1999,rabinovich-1997}).
A fairly general class of systems was introduced in \cite{barrett-mortveit-reidys-2000}. There, 
a finite discrete dynamical system (over a finite domain $\XD$) is defined as: 
(a) a finite undirected graph, where vertices correspond to variables and edges correspond to an interdependence 	
		between the two connected variables, 
(b) for each vertex $v$, a local transition function that maps tuples of values (belonging to $\XD$) of $v$ and $v$'s 
neighbors to values of $v$, and 
(c)  an update schedule that governs which variables are allowed to update their values in which time steps.
Formal definitions can be found in Sect.~\ref{sec:dynamical-systems}. 

A central goal in the study of dynamical systems is to 
classify them according to how easy it is to predict their behavior. 
In a finite, discrete setting, a certain behavioral pattern is considered predictable if 
it can be decided in polynomial time whether a given system will show the 
pattern \cite{buss-papadimitriou-tsitsiklis-1991}. 
Although the pattern reachability problem is, in general, an intractable problem,
 i.e., at least $\NP$-hard (see, e.g., 
\cite{green-1987,sutner-1995,barrett-hunt-marathe-ravi-rosenkrantz-stearns-2006}), many tractable classes of 
patterns and systems have been identified. However, there is still a serious demand 
for exhaustive characterizations of {\em islands of predictability}.

A fundamental behavioral pattern is the {\em fixed point} (a.k.a., homogeneous state, or equilibrium). 
A value assignment to the variables of a system is a fixed point if the values 
assigned to the variables are left unchanged after the system updates them. 
Note that fixed points are invariant under changes of the update regime. 
In this sense, they can be seen as a particularly robust behavior. A series of recent 
papers has been devoted to the identification of finite 
systems with tractable/intractable fixed-point analyses 
\cite{barrett-hunt-marathe-ravi-rosenkrantz-stearns-tosic-2001,agha-tosic-2005,tosic-2005,tosic-2006,kosub-2008}. 
Precise boundaries are known for which systems finding fixed points can be done in polynomial time. For the fixed-point counting problem 
this is far less so.

\medskip
\noindent
{\em Contributions of the paper.} 
We prove dichotomy theorems on the computational complexity of counting
fixed points in boolean (discrete) dynamical systems, i.e., finite discrete 
dynamical systems over the domain $\{0,1\}$.
For a class $\F$ of boolean functions and a class $\XG$ of graphs, an $(\F,
\XG)$-system is a boolean dynamical system with local transition functions 
lying in $\F$ and a graph lying in $\XG$. 
Following \cite{kosub-2008}, Post classes (a.k.a., clones) and 
forbidden-minor classes are used to classify $(\F,\XG)$-systems. 
In Sect.~\ref{sec:islands} we state the following theorem (Theorem \ref{thm:fpc-t}):  
Let $\F$ be a class of boolean function closed under superposition and let 
$\XG$ be a minor-closed graph class. 
If $\F$ contains all $\min$-functions, all $\max$-functions, or all self-dual 
and monotone functions, and $\XG$ contains all planar graphs, then it is 
$\#\P$-complete to compute the number of the fixed points in an 
$(\F,\XG)$-system; otherwise it is computable in polynomial time. 
Here, the local transition functions are supposed to be given by lookup tables.
In addition, we prove a dichotomy theorem (Theorem 
\ref{thm:fpc-f}) for the case that local transition functions are given by 
formulas (over logical bases). 
Moreover, the corresponding theorem for boolean circuits coincides with the 
theorem for formulas. 
The theorem has a significantly more complicated structure than for 
lookup tables.

\medskip
\noindent
{\em Related work.} 
There is a series of work regarding the complexity of certain computational 
problems for finite discrete dynamical systems (see, e.g., 
\cite{green-1987,sutner-1995,barrett-hunt-marathe-ravi-rosenkrantz-stearns-tosic-2001,barrett-hunt-marathe-ravi-rosenkrantz-stearns-2003,barrett-hunt-marathe-ravi-rosenkrantz-stearns-2003a,agha-tosic-2005,tosic-2005,barrett-hunt-marathe-ravi-rosenkrantz-stearns-2006} and the
references therein). 
The problem of counting fixed points of boolean dynamical systems 
has been studied in \cite{agha-tosic-2005,tosic-2005,tosic-2006}. To summarize:
counting the number of fixed points is in general $\#\P$-complete. So is counting 
the number of fixed points for boolean dynamical systems
with monotone local transition functions over planar bipartite graphs or
over uniformly sparse graphs. We note that all system classes
considered here are based on formula or circuit representations. 
That is, if they fit into our scheme at all, then the intractability results fall 
into the scope of Theorem \ref{thm:fpc-f} (and are covered there).
Detailed studies of computational problems related to fixed-point existence 
have been reported in \cite{barrett-hunt-marathe-ravi-rosenkrantz-stearns-tosic-2001,kosub-2008}. 
In \cite{kosub-2008}, a complete
classification of the fixed-point existence problem with respect to the analysis 
framework we use in this paper was shown.


\section{The Dynamical Systems Framework}
\label{sec:dynamical-systems}

In this section we present a formal framework for dynamical systems. 
A fairly general approach is motivated by the theoretical 
study of simulations. The following is based on  
\cite{barrett-reidys-1999,barrett-mortveit-reidys-2000,barrett-mortveit-reidys-2001,kosub-2008}.

The underlying network structure of a dynamical 
system is given by an undirected graph $G=(V,E)$ without multi-edges and 
loops. We suppose that the set $V$ of vertices is ordered. So, 
without loss of generality, we assume $V=\{1, 2, \dots,n\}$. For any 
vertex set $U\subseteq V$, let $N_G(U)$ denote the neighbors of $U$ in $G$, 
i.e., 
\[N_G(U) =_{\rm def} \{~j~|~j\notin U\textrm{ and there is an $i\in U$ 
such that $\{i,j\}\in E$}~\}.\]
If $U=\{i\}$ for some vertex $i$, then we use $N_G(i)$ as a shorthand for 
$N_G(\{i\})$. The degree $d_i$ of a vertex $i$ is the number of its 
neighbors, i.e., $d_i=_{\rm def} \|N_G(i)\|.$ 

A {\em dynamical system $S$ over a domain $\XD$} is a pair $(G, F)$ where 
$G=(V,E)$ is an undirected graph (the {\em network}) and $F=\{f_i~|~i\in V\}$ 
is a set of {\em local transition functions} $f_i:\XD^{d_i+1}\to \XD$.
The intuition of the definition is that each vertex $i$ corresponds 
to an active element (entity, agent, actor etc.) which is always in some 
state $x_i$ and which is capable to change its state, if necessary. The 
domain of $S$ formalizes the set of possible states of all vertices of the 
network, i.e., for all $i\in V$, it always holds that $x_i\in\XD$. A vector 
$\vec{x}=(x_i)_{i\in V}$ such that $x_i\in \XD$ for all $i\in V$ is called 
a {\em configuration of $S$}. The local transition function $f_i$ for some 
vertex $i$ describes how $i$ changes its state depending on the states of 
its neighbors $N_G(i)$ in the network and its own state. 

We are particularly interested in dynamical systems operating on a discrete time-scale. 
A {\em discrete dynamical system $\XS=(S,\alpha)$} consists of a dynamical system $S$ 
and a mapping $\alpha:\{1,\dots,T\}\to \pot(V)$, where $V$ is a set of vertices of the 
network of $S$ and $T\in\Nat$. The mapping $\alpha$ is called the {\em 
update schedule} and specifies which state updates are realized at certain 
time-steps: for $t\in\{1,\dots,T\}$, $\alpha(t)$ specifies those vertices that 
simultaneously update their states in step $t$. 

A discrete dynamical system $\XS=(S,\alpha)$ over domain 
$\XD$ induces a global map $\XF_\XS:\XD^n \to\XD^n$ where $n$ is the number 
of vertices of $S$. For each vertex $i\in V$, define an {\em activity function} 
$\varphi_i$ for a set $U\subseteq V$ and $\vec{x}=(x_1,\dots,x_n)\in\XD^n$ by
\[
\varphi_i[U](\vec{x})=_{\rm def}\left\{\begin{array}{ll}
f_i(x_{i_1},\dots,x_{i_{d_i+1}})  &\textrm{ if $i\in U$}\\
x_i & \textrm{ if } i\notin U
\end{array}\right.\]
where $\{i_1,i_2,\dots,i_{d_i+1}\}=\{i\}\cup N_G(i)$. For a set $U\subseteq V$, 
define the {\em global transition function} $\XF_S[U]:\XD^n\to \XD^n$ for all 
$\vec{x}\in\XD^n$ by \[\XF_S[U](\vec{x})=_{\rm def} (\varphi_1[U](\vec{x}),
\dots,\varphi_n[U](\vec{x})).\] Note that the global transition function does 
not refer to the update schedule, i.e., it only depends on the dynamical system 
$S$ and not on $\XS$. The function $F_\XS:\XD^n\to\XD^n$ computed by the discrete 
dynamical system $\XS$, the {\em global map} of $\XS$, is defined by
\[\XF_\XS=_{\rm def} \prod_{k=1}^T \XF_S[\alpha(k)].\]

The central notion for our study of dynamical systems is the 
concept of a fixed point, i.e., a configuration which does not change under any 
global behavior of the system.
Let $S=(G,\{f_i~|~i\in V\})$ be a dynamical system over domain $\XD$. 
A configuration $\vec{x}\in \XD^n$ is said to be a {\em local fixed 
point of $S$ for $U\subseteq V$} if and only if $\XF_S[U](\vec{x})=
\vec{x}$.
A configuration $\vec{x}\in \XD^n$ is said to be a {\em fixed point 
of $S$} if and only if $\vec{x}$ is a local fixed point of $S$ for $V$.
Note that a fixed point does not depend on a concrete update schedule:
a configuration $\vec{x}\in\XD^n$ is a fixed point of $S$ if and only if 
for all update schedules $\alpha:\{1,\dots,T\}\to \pot(V)$, it holds that 
$\XF_{(S,\alpha)}(\vec{x})=\vec{x}.$


\section{The Analysis Framework}

In this section we specify our analysis framework for $(\F,\XG)$-systems.
Following \cite{kosub-2008}, local transition functions are classified 
by Post classes, i.e., superpositionally closed classes of boolean functions,
and graphs are classified using the theory of graph minors as a tool. 
In the following we gather relevant notation.

\subsection{Transition Classes}

We adopt notation from \cite{boehler-creignou-reith-vollmer-2003}.
An $n$-ary boolean function $f$ is a mapping $f:\{0,1\}^n\to\{0,1\}$. Let $\BF$
denote the class of all boolean functions. There are two $0$-ary boolean functions: 
$c_0=_{\rm def} 0$ and $c_1=_{\rm def} 1$ (which are denoted in formulas by the symbols 
$0$ and $1$). There are two $1$-ary boolean functions: $\id(x)=_{\rm def} x$ and 
$\nt(x)=_{\rm def} 1-x$ (which are denoted in formulas by $x$ for $\id(x)$ and $\overline{x}$ 
for $\nt(x)$). 

We say that a class $\F$ is {\em Post} if and only if $\F$ contains the function $\id$ and
$\F$ is closed under the introduction of fictive variables, permutations of variables, identification
of variables, and substitution (see, e.g., \cite{boehler-creignou-reith-vollmer-2003} for definitions).
It is a famous theorem by Post \cite{post-1941} that the family of all Post 
classes is a countable lattice with respect to set inclusion.
In particular, each Post class
is the intersection of a finite set of meet-irreducible classes, which are the following:
\begin{itemize}
\item {\em The classes $\BFR_0$ and $\BFR_1$.} For $b\in\{0,1\}$, a boolean 
		function $f$ is said to be {\em $b$-reproducing} if and only if $f(b,
		\dots,b)=b$. Let $\BFR_b$ denote the class of all $b$-reproducing 
		functions. 
\item {\em The class $\BFM$.} For binary $n$-tuples $\vec{a}=(a_1,\dots,a_n)$ 
		and $\vec{b}=(b_1,\dots,b_n)$, we say that $(a_1,\dots,a_n)\le (b_1,
		\dots,b_n)$ if and only if for all $i\in \{1,\dots,n\}$,  it holds that 
		$a_i\le b_i$. An $n$-ary boolean function $f$ is said to be {\em monotone} 
		if and only if for all $\vec{x},\vec{y}\in\{0,1\}^n$, $\vec{x}\le\vec{y}$ 
		implies $f(\vec{x})\le f(\vec{y})$. Let $\BFM$ denote the class of all 
		monotone boolean functions.
\item {\em The class $\BFD$.} An $n$-ary boolean function $f$ is said to be {\em 
		self-dual} if and only if for all $(x_1,\dots,x_n)\in\{0,1\}^n$, it holds 
		that $f(x_1,\dots,x_n)=\nt(f(\nt(x_1),\dots,\allowbreak\nt(x_n)))$. Let 
		$\BFD$ denote the class of all self-dual functions.
\item {\em The class $\BFL$.} A boolean function $f$ is linear if and only if there 
		exists constants $a_1,\dots,a_n\in\{0,1\}$ such that $f(x_1,\dots,x_n)=a_0
		\oplus a_1x_1\oplus \cdots \oplus a_nx_n$. Note that $\oplus$ is understood 
		as addition modulo $2$ and $xy$ is understood as multiplication modulo $2$. 
		Let $\BFL$ denote the class of all linear functions. The logical basis of $\BFL$
		is $\{\oplus,0,1\}$.
\item {\em The classes $\BFS_b$ and $\BFS_b^k$.} For $b\in\{0,1\}$, a tuple set 
		$T\subseteq\{0,1\}^n$ is said to be $b$-separating if and only if there is an
		$i\in\{1,\dots,n\}$ such that for $(t_1,\dots,t_n)\in T$ holds $t_i=b$. A 
		boolean function $f$ is {\em $b$-separating} if and only if $f^{-1}(b)$ is 
		$b$-separating. A function $f$ is called {\em $b$-separating of level 
		$k$} if and only if every $T\subseteq f^{-1}(b)$ such that $\|T\|=k$ 
		is $b$-separating. Let $\BFS_b$ denote the class of $b$-separating 
		functions and let $\BFS_b^k$ denote the class of all functions which 
		are $b$-separating of level $k$.
\item {\em The classes $\BFE$ and $\BFV$.} We denote by $\BFE$ the class of all
		AND functions, i.e., the class of all functions $f$, the arity of which 
		is $n$, such that for some set $J\subseteq\{1,\dots,n\}$, the equality 
		$f(x_1,\dots,x_n)=\min_{i\in J} x_i$ is satisfied for all $x_1,\dots,x_n\in
		\{0,1\}$. The logical basis over $\BFE$ is $\{\land,0,1\}$. Dually, we 
		denote by $\BFV$ the class of all OR functions, i.e.,
		the class of all functions $f$, the arity of which is $n$, such that for
		some set $J\subseteq\{1,\dots,n\}$, the equality $f(x_1,\dots,x_n)=\max_{i
		\in J} x_i$ is satisfied for all $x_1,\dots,x_n\in\{0,1\}$. The logical basis
		of $\BFV$ is $\{\vee,0,1\}$.
\item {\em The class $\BFN$.} An $n$-ary boolean function $f$ is a projection 
		if and only if there is an $i\in\{1,\dots,n\}$ such that for all $x_1,
		\dots,x_n\in\{0,1\}$, it holds that $f(x_1,\dots,x_n)=x_i$. A boolean 
		function $f$ is the negation of a projection if and only if there is an 
		$i\in\{1,\dots,n\}$ such that for all $x_1,\dots,x_n\in\{0,1\}$, it holds 
		that $f(x_1,\dots,x_n)=\nt(x_i)$. A boolean function $f$ is constant if
		and only if there exists a $b\in\{0,1\}$ such that for all $x_1,\dots,
		x_n\in\{0,1\}$, it holds that $f(x_1,\dots,x_n)=b$. Let $\BFN$ denote 
		the class of boolean functions which are projections, negations of 
		projections, or constant functions.
\end{itemize}
Note that the classes possess the following inclusion structure (see, e.g., \cite{boehler-creignou-reith-vollmer-2003}): 
\begin{itemize}
\item $\BFS_0\subset\cdots\subset \BFS_0^k \subset \BFS_0^{k-1}\subset 
		\cdots\subset \BFS_0^2\subset\BFR_1$ 
\item $\BFS_1\subset\cdots\subset \BFS_1^k \subset \BFS_1^{k-1}\subset
		\cdots\subset\BFS_ 1^2\subset \BFR_0$
\item $\BFE\subset\BFM$ and $\BFV\subset\BFM$
\item $\BFN\subset \BFL$
\end{itemize}
No other inclusions hold among these classes. Moreover, all Post classes have a finite logical basis.
Particular relevance for our studies have the following classes:
\[\begin{array}{rll}
\BFD_2	  &=_\df~ \BFD\cap \BFM 			
&\qquad\textrm{with logical basis $\{(x\land y)\vee (x\land z)\vee (y\land z)\}$}\\
\BFS_{00} &=_\df~ \BFS_0 \cap \BFM \cap \BFR_0
&\qquad\textrm{with logical basis $\{x\vee (y\land z)\}$}\\
\BFS_{10} &=_\df~ \BFS_1 \cap \BFM \cap \BFR_1
&\qquad\textrm{with logical basis $\{x\land (y\vee z)\}$}\\
\BFE_2 	  &=_\df~ \BFE\cap \BFS_{10}
&\qquad\textrm{with logical basis $\{\land\}$}\\
\BFV_2    &=_\df~ \BFV\cap \BFS_{00}
&\qquad\textrm{with logical basis $\{\vee\}$}
\end{array}\]

\subsection{Network Classes}

We adopt notation from \cite{diestel-2003}. 
Let $X$ and $Y$ be two undirected
graphs. We say that $X$ is minor of $Y$ if and only if there is a subgraph
 $Y'$ of $Y$ such that $X$ can be obtained by contracting edges of $Y'$.
Let $\preceq$ be the relation on graphs defined by $X\preceq Y$ if and only 
if $X$ is a minor of $Y$. A class $\XG$ of graphs is said to be {\em closed 
under taking minors} if and only if for all graphs $G$ and $G'$, if $G\in\XG$ 
and $G'\preceq G$, then $G'\in \XG$. Let $\XX$ be any set of graphs. 
$\Forb_\preceq(\XX)$ denotes the class of all graphs without a minor in $\XX$ 
(and which is closed under isomorphisms). More specifically, 
$\Forb_\preceq(\XX)=_{\rm def} \{G~|~\textrm{$G\not\succeq X$ for all 
$X\in\XX$}~\}$. The set $\XX$ is called the set of {\em forbidden minors}.
Note that $\Forb_\preceq(\emptyset)$ is the class of all graphs.
As usual, we write $\Forb_\preceq(X_1,\dots,X_n)$ instead of 
$\Forb_\preceq(\{X_1,\dots,X_n\})$. 
Forbidden-minor classes are monotone with respect to $\preceq$, i.e., $X\preceq Y$ 
implies $\Forb_\preceq(X)\subseteq \Forb_\preceq(Y)$.
The celebrated Graph Minor Theorem, due to Robertson 
and Seymour \cite{robertson-seymour-2004}, shows that there are only countably
many network classes closed under taking minors: A class $\XG$ of graphs is 
closed under taking minors if and only if there is a finite set $\XX$ such that 
$\XG=\Forb_\preceq(\XX)$. 

Two graph classes are particularly relevant to our study: planar graphs and graphs
having a vertex cover of size one. Let $K^n$ denote the complete graphs on $n$ 
vertices and let $K_{n,m}$ denote the complete bipartite graph having $n$ vertices 
in one component and $m$ vertices in the other component. The well-known 
Kuratowski-Wagner theorem (see, e.g., \cite{diestel-2003}) states that a graph 
$G$ is planar if and only if $G$ belongs to $\Forb_\preceq(K_{3,3},K^5)$.
Moreover, a graph $X$ is planar if and only if $\Forb_\preceq(X)$ has bounded 
treewidth \cite{robertson-seymour-1986}. As we use the treewidth of a graph only 
in a black-box fashion, we refer to, e.g., \cite{diestel-2003} for a definition. 
A class $\XG$ of graphs is said to have {\em bounded treewidth} if and only if 
there is a $k\in\Nat$ such that all graphs in the class have treewidth at 
most $k$. Let $G=(V,E)$ be a graph. We say that a subset $U\subseteq V$ 
is a {\em vertex cover} of $G$ if and only if for all edges $\{u,v\}\in E$,
it holds that $\{u,v\}\cap U\not=\emptyset$. It is known that the class of graphs 
having a vertex cover of size at most $k$ is closed under taking minors 
\cite{cattell-dinneen-1994}. Moreover, $G$ has a vertex cover of size one if and 
only if $G$ belongs to $\Forb_\preceq(K^3,K^2\oplus K^2)$ \cite{cattell-dinneen-1994},
where for graphs $G$ and $G'$, $G\oplus G'$ denotes the graph obtained by the disjoint 
union of $G$ and $G'$. 
A class of graphs is said to have {\em bounded degree} if and only if there is
a $k\in\Nat$ such that all graphs in the class have a maximum vertex-degree of at
most $k$. It is known that a graph $X$ has a vertex cover of size one if and only if
$\Forb_\preceq(X)$ has bounded degree (cf., e.g., \cite{kosub-2008}).

\section{Islands of Tractability for Fixed Point Counting}
\label{sec:islands}

In this section we are interested in the computational complexity of the following 
counting problem. Let $\F$ be a class of boolean functions and let $\XG$ be a 
class of graphs.
\bigskip

\noindent
\begin{tabular}{p{.15\linewidth}p{.75\linewidth}}
  \textsl{Problem:}  	& $\#\FPE(\F,\XG)$ \\
  \textsl{Input:}    	& An $(\F,\XG)$-system $S$, i.e., a boolean dynamical system $S=(G,\{f_1,\dots,f_n\})$ 
						  such that $G\in\XG$ and for all $i\in\{1,\dots,n\}$, 
						  $f_i\in\F$\\
  \textsl{Output:}   	& The number of fixed points of $S$\\
\end{tabular}

\bigskip
\noindent
The complexity of the problem depends on how transition functions are represented.
We consider the cases of lookup table, formula, and circuit representations. 
The corresponding problems are denoted by $\#\FPE_{\lookup}$, $\#\FPE_\formula$, and
$\#\FPE_\circuit$. 
It is obvious that all problem versions belong to $\#\P$. 
We say that a problem is intractable if it is $\#\P$-hard (with respect to Turing 
reductions, as described in, e.g., \cite{hemaspaandra-ogihara-2002}), and it is 
tractable if it is solvable in polynomial time.

\subsection{The Case of Local Transition Functions Given By Lookup Tables}

We start by identifying tractable counting problems.

\begin{lemma}\label{lem:fpc-l-all-t}
$\#\FPE_\lookup(\BFL,\Forb_\preceq(\emptyset))$ is solvable in polynomial time.
\end{lemma}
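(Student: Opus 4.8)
The plan is to reduce the problem of counting fixed points of an $(\BFL,\Forb_\preceq(\emptyset))$-system to counting solutions of a system of linear equations over $\mathrm{GF}(2)$. Recall that a configuration $\vec{x}\in\{0,1\}^n$ is a fixed point of $S=(G,\{f_1,\dots,f_n\})$ if and only if $f_i(x_{i_1},\dots,x_{i_{d_i+1}})=x_i$ for every vertex $i\in V$, where $\{i_1,\dots,i_{d_i+1}\}=\{i\}\cup N_G(i)$. (This is just the statement $\XF_S[V](\vec{x})=\vec{x}$ unpacked coordinate-wise; no update schedule enters, as noted in Sect.~\ref{sec:dynamical-systems}.) Since each $f_i\in\BFL$, there are constants $a_{i,0},a_{i,1},\dots\in\{0,1\}$ with $f_i(x_{i_1},\dots,x_{i_{d_i+1}}) = a_{i,0}\oplus\bigoplus_{j} a_{i,j}x_{i_j}$. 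The fixed-point condition for vertex $i$ becomes the linear equation $x_i \oplus a_{i,0}\oplus\bigoplus_j a_{i,j}x_{i_j}=0$ over $\mathrm{GF}(2)$. So the fixed points of $S$ are exactly the solutions of an $n\times n$ linear system $A\vec{x}=\vec{b}$ over $\mathrm{GF}(2)$.

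The first step is to extract, for each $i$, the coefficients $a_{i,j}$ from the lookup-table representation of $f_i$ in polynomial time: evaluating $f_i$ on the all-zero input gives $a_{i,0}$, and evaluating on each standard basis vector and XOR-ing with $a_{i,0}$ recovers each $a_{i,j}$; the lookup table has size $2^{d_i+1}$ but is part of the input, so this is polynomial in the input size. Assembling these into the matrix $A$ and vector $\vec{b}$ (indexing rows by vertices and columns by the $n$ variables, placing $a_{i,j}$ in the column for neighbor $i_j$ and flipping the diagonal entry to account for the $x_i$ on the left side) is immediate. The second step is to count the solutions: by Gaussian elimination over $\mathrm{GF}(2)$, compute the rank $r$ of $A$ and check consistency; if the augmented system is inconsistent the count is $0$, and otherwise the number of solutions is exactly $2^{n-r}$. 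Gaussian elimination over a field runs in time polynomial in $n$, hence polynomial in the input size, so $\#\FPE_\lookup(\BFL,\Forb_\preceq(\emptyset))$ is solvable in polynomial time.

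There is essentially no hard part here — the only point requiring a word of care is that the graph class is $\Forb_\preceq(\emptyset)$, i.e., \emph{all} graphs with no structural restriction, so we genuinely need the full linear-algebra argument rather than anything exploiting sparsity or bounded treewidth; but Gaussian elimination handles arbitrary graphs fine. One should also note that the coefficient extraction is valid precisely because linearity means each output bit depends affinely on the inputs, so the $O(d_i)$ evaluations suffice to pin down $f_i$ completely — no interaction terms can hide. I expect the write-up to be short: set up the linear system, cite polynomial-time Gaussian elimination, and read off $2^{n-\mathrm{rank}(A)}$.
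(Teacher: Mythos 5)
Your proposal is correct and follows essentially the same route as the paper: rewrite each fixed-point condition $f_i(\vec{x})=x_i$ as a linear equation over $\mathbb{Z}_2$ and count the solutions of the resulting system via Gaussian elimination. The paper states this more tersely (citing Creignou--Hermann for the counting step), while you additionally spell out the coefficient extraction from the lookup tables and the $2^{n-\mathrm{rank}}$ count, but the underlying argument is identical.
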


\begin{proof}
Notice that for a linear function $f(x_1,\dots,x_n)=a_0\oplus a_1x_2\oplus a_2x_2
\oplus\cdots\oplus a_nx_n$, the proposition $x_i\leftrightarrow [a_0\oplus a_1x_2
\oplus a_2x_2\oplus\cdots\oplus a_nx_n]$ is true if and only if $a_0\oplus a_1x_2\oplus 
a_2x_2\oplus\cdots\oplus a_nx_n\oplus x_i\oplus 1$ is satisfiable. So, each 
dynamical system with linear, boolean local transition functions constitutes a system of 
linear equations over $Z_2$, for which the number of solutions can be computed
in polynomial time using Gaussian elimination (cf.~\cite{creignou-hermann-1996}).
\end{proof}

In \cite{kosub-2008}, it has been shown that the decision version of 
$\#\FPE_\lookup(\BF,\allowbreak\Forb_\preceq(X))$ for planar graphs $X$ can be solved in 
polynomial time. 
This result is obtained by a reduction to a certain type of constraint satisfaction 
problems. 
Actually, the reduction establishes injections between the fixed points of a 
dynamical system and the satisfying assignments of the corresponding constraint 
satisfaction problem. 
Consequently, the numbers of fixed points and the numbers of satisfying 
assignments are equal. 

\begin{lemma}\label{lem:fpc-bf-planar-t}
Let $X$ be a planar graph. Then, $\#\FPE_\lookup(\BF,\Forb_\preceq(X))$ is solvable in polynomial time.
\end{lemma}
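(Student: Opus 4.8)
The plan is to reduce the fixed-point counting problem to a counting constraint-satisfaction problem of bounded treewidth, and then invoke a known polynomial-time counting algorithm for such CSPs. The starting point is the observation (already used in \cite{kosub-2008} for the decision version) that a configuration $\vec{x}\in\{0,1\}^n$ is a fixed point of the system $S=(G,\{f_1,\dots,f_n\})$ if and only if it simultaneously satisfies, for every vertex $i$, the local constraint $x_i = f_i(x_{i_1},\dots,x_{i_{d_i+1}})$, where $\{i_1,\dots,i_{d_i+1}\}=\{i\}\cup N_G(i)$. Each such constraint involves only vertex $i$ and its neighbors, so its scope is contained in a closed neighborhood of $G$. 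Thus the set of fixed points of $S$ is exactly the solution set of a CSP instance $\Phi_S$ whose variables are $x_1,\dots,x_n$ and whose constraints are these $n$ local equations. Since $G\in\Forb_\preceq(X)$ for a planar graph $X$, by Robertson--Seymour \cite{robertson-seymour-1986} the class $\Forb_\preceq(X)$ has treewidth bounded by some constant $k=k(X)$.

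Next I would control the structure of $\Phi_S$. The key point is that the \emph{primal} (Gaifman) graph of $\Phi_S$ — the graph on $\{x_1,\dots,x_n\}$ with an edge between two variables whenever they appear together in some constraint — is exactly $G$ itself, because the scope of the $i$-th constraint is $\{i\}\cup N_G(i)$ and each pair inside a closed neighborhood is joined by an edge of $G$ or shares a common neighbor\ldots\ here one must be slightly careful: two neighbors $j,j'$ of $i$ that are not adjacent in $G$ nonetheless co-occur in the $i$-th constraint, so the primal graph is $G$ together with all edges inside closed neighborhoods, i.e.\ (a subgraph of) $G^2$. The clean fix is to use the \emph{incidence} graph of $\Phi_S$ instead: it has a treewidth bounded by a function of $\tw(G)$ and of the maximum constraint arity, and the maximum arity here is $d_{\max}+1\le \tw(G)+1$ only if degrees are bounded, which they need not be. So the correct route is: build a tree decomposition of $G$ of width $\le k$; for each vertex $i$, the closed neighborhood $\{i\}\cup N_G(i)$ need not fit in one bag, but one can first replace $G$ by $G'$ obtained by adding, for each $i$, a clique on $\{i\}\cup N_G(i)$ — equivalently work with the graph whose treewidth may blow up. To avoid this, the standard device is to introduce the constraint $x_i=f_i(\cdot)$ via a fresh constraint \emph{vertex} and keep arities large but bounded treewidth of the incidence graph; the bound $\tw(\text{incidence graph})\le \tw(G)+1$ holds because every closed neighborhood of $G$ can be covered by adding one vertex to a bag. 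I would cite the appropriate bounded-treewidth-CSP counting result (counting solutions of a CSP whose incidence-graph treewidth is bounded is polynomial-time — this is a standard dynamic-programming-over-tree-decomposition argument, e.g.\ following the techniques in \cite{creignou-hermann-1996} and subsequent CSP-counting work).

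Putting it together: given an $(\BF,\Forb_\preceq(X))$-system $S$ with lookup-table transition functions, (1) compute a width-$k$ tree decomposition of $G$ in linear time (possible since $k$ is a fixed constant); (2) form the CSP instance $\Phi_S$ with the $n$ local equality constraints, each given explicitly by its truth table (obtainable in polynomial time from the lookup table of $f_i$); (3) augment the tree decomposition so that each closed neighborhood $\{i\}\cup N_G(i)$, viewed as the scope of constraint $i$, sits inside a bag that additionally contains a single representative for $i$ — yielding incidence-graph treewidth $\le k+1$; (4) run the standard bag-by-bag dynamic program that, for each bag and each partial assignment to its variables, accumulates the number of compatible partial solutions, multiplying in the constraint tables as constraints are introduced; the value at the root is $\#\FPE_\lookup(S)$. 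Correctness follows from the fixed-point characterization above together with the correctness of the tree-decomposition dynamic program; the running time is $n^{O(1)}\cdot 2^{O(k)}$, polynomial since $k$ is fixed.

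The main obstacle — and the point that deserves the most care in the writeup — is step (3): the local constraints have \emph{unbounded arity} (a high-degree vertex $i$ yields a constraint on $d_i+1$ variables), so one cannot naively claim the primal graph has bounded treewidth. The resolution is to track the \emph{incidence} treewidth, observing that adding the constraint vertex for $i$ to any bag containing $i$ raises the width by at most one while making the scope of constraint $i$ "visible" in a single bag; equivalently, one checks that $G$ having treewidth $\le k$ forces the hypergraph with hyperedges $\{\{i\}\cup N_G(i)\}_i$ to have a tree decomposition (in the hypergraph sense) of width $\le k+1$. Once that structural lemma is in place, the counting algorithm is routine dynamic programming, and the bounded arity of the output values (each a single bit summed with multiplicities) keeps all intermediate counts polynomially bounded in bit-length. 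As the authors remark that the decision version was already handled in \cite{kosub-2008} via exactly this CSP reduction with the numbers of fixed points and satisfying assignments in bijection, I expect the intended proof is simply: invoke that reduction, note the bijection is value-preserving, and apply a polynomial-time solution-counting algorithm for bounded-treewidth CSPs in place of the polynomial-time satisfiability test.
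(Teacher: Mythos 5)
Your proposal is correct and is essentially the paper's proof, which is exactly the two-line argument you anticipate in your final paragraph: the CSP reduction of \cite{kosub-2008} puts the fixed points of $S$ in bijection with the satisfying assignments of a bounded-treewidth CSP instance, and the polynomial-time counting algorithm for such instances from \cite{flum-grohe-2004} then gives the count. The unbounded-arity subtlety you flag is real but is absorbed into the citation of \cite{kosub-2008}; the only imprecision in your fix is the claimed bound $\mathrm{tw}(G)+1$ on the incidence treewidth --- adding the constraint vertex $c_i$ to \emph{every} bag containing $i$ yields width at most $2\,\mathrm{tw}(G)+1$ rather than $\mathrm{tw}(G)+1$, which is still a constant and so harmless.
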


\begin{proof} By inspection of \cite{kosub-2008} and noting that counting 
satisfying assignments for constraint satisfaction problems having constraint graphs 
of bounded treewidth can be done in polynomial time (cf.~\cite{flum-grohe-2004}).
\end{proof}

We turn to the intractable fixed-point counting problems. Let $H$ be a 2CNF such that each 
clause consists of exactly one positive and one
negative literal. $H$ is called a Horn-2CNF formula. Moreover, suppose $H$ 
has a planar graph representation, i.e., the graph $\Gamma(H)=(V,E)$ with 
vertex set $V=\{x_1,\dots,x_n,C_1,\dots,C_m\}$, where the $x_i$'s are the variables 
and the $C_i$'s are the clauses of $H$, and edge set 
$E=\{ \{x_i,C_j\}~|~\textrm{ $x_i$ is a variable in $C_j$}\}$ is planar. 
Then, $H$ is called a planar Horn 2-CNF formula.
$\#\PHSAT$ is the problem of counting all satisfying assignments of a given planar 
Horn-2CNF formula. 

\begin{proposition}
$\#\PHSAT$ is $\#\P$-complete even if each variable is allowed to occur in four 
clauses only.
\end{proposition}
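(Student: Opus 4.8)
The plan is to reduce from a known $\#\mathsf{P}$-complete counting problem via a chain of parsimonious (or carefully accounting) reductions that preserve both planarity and bounded occurrence. A natural starting point is $\#\mathrm{3SAT}$ restricted to planar formulas with bounded variable occurrence, which is known to be $\#\mathsf{P}$-complete; alternatively one could start from counting independent sets in planar graphs of bounded degree, which is $\#\mathsf{P}$-complete and already geometric in flavour. First I would pass from the chosen source problem to a CNF whose clauses each contain exactly one positive and one negative literal (the Horn-2CNF shape), introducing fresh auxiliary variables to split longer clauses and to flip polarities; the standard trick is that an implication $x \to y$ is exactly a Horn-2 clause $(\bar x \vee y)$, so one re-expresses the constraints as a conjunction of implications, and chains of implications $x \to z_1 \to z_2 \to \cdots \to y$ simulate longer dependencies while each fresh $z_i$ occurs in only two clauses.

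The key steps, in order: (1) Fix the source instance and rewrite its constraints as a set of implications between literals, tracking that in the resulting $2$CNF every clause is Horn-2. (2) For any variable that would occur in too many clauses, introduce a small "copy gadget": a chain of equivalences $x \leftrightarrow x' \leftrightarrow x'' \leftrightarrow \cdots$, where each equivalence $u \leftrightarrow v$ is two Horn-2 clauses $(\bar u \vee v) \wedge (\bar v \vee u)$, and distribute the original occurrences of $x$ among the copies so that each copy (and $x$ itself) occurs in at most four clauses total. (3) Verify planarity: draw $\Gamma(H)$ from a planar embedding of the source instance, routing the copy chains and implication clauses along the existing faces so that no crossings are introduced — this is where the bounded-degree/bounded-occurrence hypothesis on the source is essential, since it limits how much new wiring must be squeezed into each face. (4) Confirm the reduction is parsimonious: the equivalence gadgets force all copies of $x$ to take the same value, so satisfying assignments of $H$ biject with those of the source instance, and therefore the counts coincide. (5) Conclude $\#\mathsf{P}$-hardness of $\#\PHSAT$ with the four-occurrence restriction; membership in $\#\mathsf{P}$ is immediate.

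I expect the main obstacle to be step (3): maintaining planarity simultaneously with the occurrence bound. Splitting a high-degree variable into a chain of copies is easy in isolation, but the copy chain plus the redistributed clause-edges must be embedded without crossings, and one must be careful that the clause-vertices $C_j$ (degree exactly $2$ here, since the clauses are Horn-2) and the variable chains interleave correctly around each face. A clean way to handle this is to first bring the source instance into a rigid normal form — e.g. a planar instance already of bounded degree in its incidence graph — so that each variable starts with only a constant number of occurrences and only a constant amount of local surgery is needed; then planarity is preserved face-by-face by a purely local argument. The polynomiality of the construction and the exact bookkeeping that the final occurrence count is $\le 4$ (rather than some larger constant) are then routine adjustments of the gadget sizes.
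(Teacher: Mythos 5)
There is a genuine gap in your step (1), and it is not a matter of bookkeeping. A Horn-2CNF clause in the sense of this statement has \emph{exactly one positive and one negative} literal, i.e., it is an implication $x_v \rightarrow x_u$ between variables. The solution set of any conjunction of such implications is closed under coordinatewise meet and join (it is a sublattice of $\{0,1\}^n$), and this closure property is inherited by projections onto any subset of the variables. Consequently, no choice of auxiliary variables, chained implications, or polarity flips can encode a general 3-clause: the satisfying assignments of $(x\vee y\vee z)$ contain $110$ and $001$ but not their meet $000$, so they do not form a lattice and cannot arise as the projection of an implication system. Your first proposed source problem (planar bounded-occurrence $\#\textsc{3Sat}$) is therefore unusable. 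Your fallback source, counting independent sets in planar bounded-degree graphs, has the same defect unless the graph is \emph{bipartite}: the independence constraint on an edge $\{u,v\}$ is $(\overline{x_u}\vee\overline{x_v})$, a clause with two negative literals, and only a global complementation of one side of a bipartition turns every such clause into the required one-positive-one-negative form. You never invoke bipartiteness, and without it the reduction cannot be completed. (A smaller problem: your copy gadget, a chain of biconditionals, already consumes all four allowed occurrences at every interior copy, leaving no budget for the original clauses; a directed implication cycle would be needed instead.)

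The paper avoids all of this by choosing the source problem so that no gadgets are needed at all: it reduces from $\#4\Delta\textsc{-Planar Bipartite Independent Set}$ (Vadhan), mapping each edge $\{u,v\}$ with $u\in V_1$, $v\in V_2$ to the single clause $(x_u\vee\overline{x_v})$. The bipartition supplies the polarity, the incidence graph $\Gamma(H)$ is just a subdivision of the input graph (hence planar), the degree bound four becomes the occurrence bound four verbatim, and the correspondence between independent sets and satisfying assignments is a bijection. Your instinct that planarity preservation is the main obstacle is reasonable for a gadget-based construction, but here the real obstruction is logical expressiveness, and the remedy is to start from a source whose solution structure is already a lattice.
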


\begin{proof} In \cite{vadhan-2002}, it has been shown that the following 
\sloppy problem is $\#\P$-complete: $\#4\Delta\textsc{-Planar}\allowbreak
\textsc{Bipartite Independent Set}$, i.e., compute, on a given bipartite graph 
$G=(V,E)$ with maximum vertex-degree at most four, the number of independent 
sets $U\subseteq V$. Let $G=(V,E)$ be a bipartite graph, $V=V_1\cup V_2$ and 
$E\subseteq V_1\times V_2$. Define $H$ to be the 2CNF given by clauses
$(x_u\vee \overline{x_v})$ for all $u\in V_1$, $v\in V_2$ such that $\{u,v\}\in E$.
Clearly, $H$ is a Horn-2CNF formula. Moreover, if $G$ is planar and the maximum
degree is at most four, the graph representation of $H$ is planar and each variable
occurs at most four times in $H$. Finally, it is easily seen that there is a bijection
between the independent sets of $G$ and the satisfying assignments for $H$ (cf., e.g., 
\cite{ball-provan-1983,linial-1986}). Hence, $\#4\Delta\textsc{-Planar}\allowbreak
\textsc{Bipartite Independent Set}$ reduces to $\#\PHSAT$ with each variable occuring 
in at most four clauses.
\end{proof}

\begin{lemma}\label{lem:fpc-e2-planar-t}
$\#\FPE_\lookup(\BFE_2,\Forb_\preceq(K_{3,3},K^5))$ is $\#\P$-complete.
\end{lemma}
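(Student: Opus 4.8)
The plan is to reduce $\#\PHSAT$ (with each variable occurring in at most four clauses), which is $\#\P$-complete by the preceding proposition, to $\#\FPE_\lookup(\BFE_2,\Forb_\preceq(K_{3,3},K^5))$. Recall that $\BFE_2$ is the clone with logical basis $\{\land\}$, so the admissible local transition functions are exactly the conjunctions $\min_{j\in J}x_j$ over nonempty index sets $J$ drawn from a vertex together with its neighbors; and $\Forb_\preceq(K_{3,3},K^5)$ is exactly the class of planar graphs. So, given a planar Horn-2CNF formula $H$ with variables $x_1,\dots,x_n$ and clauses $C_1,\dots,C_m$, each $C_\ell=(x_{p(\ell)}\vee\overline{x_{q(\ell)}})$, I want to build a planar $(\BFE_2)$-system whose fixed points are in bijection with the satisfying assignments of $H$.

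The key idea is that a clause $(x_p\vee\overline{x_q})$ is equivalent to the implication $x_q\to x_p$, i.e.\ to the inequality $x_q\le x_p$. A fixed point of a system with AND-type local functions is precisely a configuration $\vec x$ with $x_i=\min_{j\in J_i}x_j$ for each $i$; such constraints naturally encode "$x_i$ is at most each of its chosen inputs, and equal to their minimum." First I would take the graph $\Gamma(H)$ on variable-vertices and clause-vertices (planar by hypothesis) and replace each clause vertex $C_\ell$ by a small planar gadget wired to $x_{p(\ell)}$ and $x_{q(\ell)}$ that (a) forces $x_{q(\ell)}\le x_{p(\ell)}$ at every fixed point and (b) contributes no extra free binary choices — i.e.\ the gadget's internal vertices are uniquely determined by the two variable values whenever the implication holds, and the gadget has no fixed point at all when $x_{q(\ell)}=1,x_{p(\ell)}=0$. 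A natural attempt is to route a directed "copy chain" of $\land$-gates: introduce auxiliary vertices forming a path from $x_q$ toward $x_p$, each auxiliary vertex computing the AND of its predecessor (so it is forced to copy), terminating in a vertex adjacent to $x_p$ that computes $x_p\land(\text{incoming copy})$, so that the fixed-point equation there reads $x_p = x_p\land x_q$, i.e.\ $x_q\le x_p$. Meanwhile each variable vertex $x_i$ itself computes the AND of one distinguished self-loop-free input that carries its "intended" value back, so that $x_i$ is a genuine free bit; here the bounded-occurrence hypothesis (each variable in at most four clauses, hence bounded degree) is what lets me fan $x_i$ out to its clause gadgets using a bounded planar splitter gadget of copy-AND vertices without blowing up the graph or destroying planarity. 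I also need each variable vertex to be able to take the value it "wants," which for an AND function means I must feed it inputs that are all $\ge x_i$; the splitter must therefore be arranged so that every vertex on the path from $x_i$ outward copies $x_i$ and feeds nothing back that could pin $x_i$ to $0$ — concretely, give $x_i$ a single input that is a dedicated source vertex with local function the constant... but $\BFE_2$ has no constants, so instead make $x_i$'s local function the identity-via-AND on a single private neighbor that itself just copies $x_i$, realizing $x_i$ as a free bit. Planarity is preserved because each clause vertex of the planar $\Gamma(H)$ is replaced by a bounded, planar gadget drawn inside a small disk around it, and each variable's bounded-degree fan-out is likewise a bounded planar tree embedded near $x_i$.

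After the construction I would verify the bijection: the map sending a satisfying assignment $\sigma$ of $H$ to the configuration that assigns $\sigma(x_i)$ to each variable vertex and the forced copied values to all auxiliary vertices is well-defined (the forced values exist and are consistent precisely because every clause/implication is satisfied), injective (it is determined by its restriction to the variable vertices), and surjective (at any fixed point, reading off the variable vertices gives an assignment; each clause gadget's fixed-point equation forces the corresponding implication, so the assignment satisfies $H$; and the auxiliary values are then exactly the forced ones, so nothing extra was counted). Hence the number of fixed points equals the number of satisfying assignments, and since the reduction is clearly polynomial-time and the target problem is in $\#\P$, we get $\#\P$-completeness. The main obstacle I anticipate is the absence of constants and of negation in $\BFE_2$: every gadget vertex can only compute a minimum of its inputs, so I cannot "inject" fixed values or invert signals, and I must be careful that (i) each variable vertex remains a genuinely free bit rather than being silently forced to $0$ by some AND-input, and (ii) the clause gadget truly has \emph{no} fixed point in the one bad case $x_q=1, x_p=0$ rather than merely being "unsatisfied" in some weaker sense — getting a monotone/AND-only gadget to act as a hard constraint (an actual obstruction to fixed points) rather than just a soft preference is the delicate point, and I expect it to require the gadget to contain a small vertex whose fixed-point equation becomes unsatisfiable, e.g.\ a vertex forced simultaneously to equal $x_p$ and to equal $x_q$ via two copy chains while an auxiliary vertex's equation reads $z = z \wedge x_p$ with $z$ pinned to $1$ by the $x_q$-chain.
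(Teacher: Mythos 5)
Your starting point --- reducing from $\#\PHSAT$ with bounded variable occurrence and reading a Horn clause as an inequality between two variables that an AND-type fixed-point equation can enforce --- is exactly the paper's, but your construction breaks at the decisive step: \emph{where the clause constraint lives}. You make each variable vertex a free bit (its local function is the AND of a single private copy-back neighbour) and delegate clause enforcement to a separate gadget whose internal vertices compute ANDs of their neighbours and whose boundary consists of the two variable vertices. Such a gadget can never do what you need it to do. For fixed boundary values the induced map on the gadget's internal configurations is monotone, and a monotone self-map of the finite lattice $\{0,1\}^k$ always has a fixed point (iterate upward from the all-zeros configuration); so the gadget admits a consistent internal completion for \emph{every} boundary assignment, including the violating one $x_q=1$, $x_p=0$. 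Your fallback devices do not escape this: a vertex cannot be ``forced simultaneously to equal $x_p$ and to equal $x_q$ via two copy chains,'' nor can $z$ be ``pinned to $1$ by the $x_q$-chain'' while its equation ``reads $z=z\land x_p$,'' because each vertex contributes exactly one fixed-point equation, namely $x_v=f_v(\cdot)$, and any single such equation with an AND on the right-hand side is satisfiable in $x_v$ (set $x_v$ to the computed value, or to $0$ if $x_v$ occurs on the right). As written, every assignment to the variable vertices extends to a fixed point, so your system has $2^n$ fixed points rather than one per satisfying assignment.

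The repair --- and it is what the paper does --- is to put the clauses into the variable vertex's \emph{own} update function and dispense with clause vertices, gadgets, splitters, and copy chains entirely: take $V=\{1,\dots,n\}$, one edge per clause, and let $f_{i_0}=x_{i_0}\land x_{i_1}\land\cdots\land x_{i_r}$, where $i_1,\dots,i_r$ are the variables paired with $i_0$ in clauses. The fixed-point equation $x_{i_0}=x_{i_0}\land\bigwedge_{j=1}^r x_{i_j}$ is precisely $\bigwedge_{j=1}^r(x_{i_0}\rightarrow x_{i_j})$, so fixed points biject with satisfying assignments of $H$ (up to reversing the orientation of the clauses, which preserves the count via $\sigma\mapsto\neg\sigma$; note in passing that your own claim that ``$x_p=x_p\land x_q$'' means ``$x_q\le x_p$'' has the inequality backwards). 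Planarity is immediate because this graph is a minor of the planar incidence graph $\Gamma(H)$ (contract each clause vertex into one of its two variable neighbours), and the four-occurrence bound gives maximum degree four, so the lookup tables have constant size and the reduction is polynomial.
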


\begin{proof}
We reduce from $\#\PHSAT$ assuming 
that each variable occurs only four times in the formula. Let $H=C_1\land\dots\land C_m$ be a 
planar Horn-2CNF formula. Define a dynamical system $S=(G,F)$ as follows. 
$G=(V,E)$ is given by $V=_{\rm def} \{1,\dots,n\}$ and $E=_{\rm def} \{~\{i,j\}~|~
\textrm{$(\overline{x_i}\vee x_j)=C_r$ for}\allowbreak\textrm{ some $r\in\{1,\dots,m\}$}~\}$. Since
$H$ has a planar graph representation, $G$ is planar, i.e., $G\in
\Forb_\preceq(K_{3,3},K^5)$. The local transition functions are specified 
in the following way. For a vertex $i_0\in V$ let $\{i_1,\dots,i_r\}$ be the set of
all vertices such that $(\overline{x_{i_j}}\vee x_{i_0})$ is a clause in $H$. Then,
$f_{i_0}$  is the function given by the formula $H_{i_0}=x_{i_0}\land x_{i_1}\land
\dots\land x_{i_r}$. Notice that all local transition functions belong to $\BFE_2$ and
also notice that the maximum degree of a vertex in $G$ is four. Thus,
we can compute the lookup tables in polynomial time depending on the size of $H$.
Moreover, it is easily seen that 
$(x_{i_0}\leftrightarrow \bigwedge_{j=1}^r x_{i_j})
~\equiv~ \bigwedge_{j=1}^r (\overline{x_{i_j}}\vee x_{i_0}).$ Hence,
the number of satisfying assignments of $H$ is equal to the number of fixed-point
configurations of $S_H$. This shows that 
$\#\PHSAT$ reduces to $\#\FPE_\lookup(\BFE_2,\Forb_\preceq(K_{3,3},K^5))$.
\end{proof}

\begin{lemma}\label{lem:fpc-v2-planar-t}
$\#\FPE_\lookup(\BFV_2,\Forb_\preceq(K_{3,3},K^5))$ is $\#\P$-complete.
\end{lemma}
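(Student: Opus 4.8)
The plan is to mirror the proof of Lemma~\ref{lem:fpc-e2-planar-t} under the duality between $\BFE_2$ and $\BFV_2$. Recall that $\BFV_2$ has logical basis $\{\vee\}$, so a $\BFV_2$-function is just a (nonempty) disjunction of some of its arguments. The natural source problem is the dual of $\#\PHSAT$, namely counting satisfying assignments of a planar $2$CNF in which every clause consists of two negative literals (equivalently, counting independent sets in a planar graph of maximum degree four). First I would observe that this ``all-negative'' planar $2$CNF counting problem is also $\#\P$-complete with each variable occurring in at most four clauses: given a bipartite planar graph $G=(V_1\cup V_2,E)$ of maximum degree four, take the clauses $(\overline{x_u}\vee\overline{x_v})$ for $\{u,v\}\in E$; there is a bijection between independent sets of $G$ and satisfying assignments, and the graph representation is planar with each variable in at most four clauses. (Alternatively one can cite the negation/complementation symmetry directly against the Proposition already proved.)

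Next I would build the dynamical system. Given such an all-negative planar $2$CNF $H=C_1\wedge\cdots\wedge C_m$ on variables $x_1,\dots,x_n$, let $G=(V,E)$ with $V=\{1,\dots,n\}$ and $E=\{\,\{i,j\}\mid (\overline{x_i}\vee\overline{x_j})=C_r \text{ for some }r\,\}$; planarity of the graph representation of $H$ forces $G\in\Forb_\preceq(K_{3,3},K^5)$, and maximum degree is four, so lookup tables of the (bounded-arity) local functions are polynomial-size. For each vertex $i_0$, let $\{i_1,\dots,i_r\}$ be the neighbors of $i_0$, i.e., the vertices $i_j$ with $(\overline{x_{i_0}}\vee\overline{x_{i_j}})$ a clause of $H$. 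The key algebraic identity here is
\[
\bigl(x_{i_0}\leftrightarrow \overline{x_{i_1}\wedge\cdots\wedge x_{i_r}}\bigr)\ \equiv\ \bigl(x_{i_0}\leftrightarrow (\overline{x_{i_1}}\vee\cdots\vee\overline{x_{i_r}})\bigr),
\]
but the right-hand side is not a $\BFV_2$-function because of the negations. The fix is to encode each variable in complemented form: I would instead let the local transition function at $i_0$ be $f_{i_0}=x_{i_1}\vee\cdots\vee x_{i_r}$ (a genuine $\BFV_2$-function, or the constant $0$ if $i_0$ is isolated, which lies in $\BFV_2$ after absorbing it appropriately), and argue that a configuration $\vec{x}$ is a fixed point of $S$ iff the assignment $x_i\mapsto \overline{x_i}$ satisfies every clause of $H$. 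Concretely, $x_{i_0}=\bigvee_j x_{i_j}$ for all $i_0$ says exactly that $x_{i_0}=1$ precisely when some neighbor is $1$; complementing, the set $\{i: x_i=0\}$ is closed under ``having a neighbor'', which — because each clause is an edge — is equivalent to $\{i:\overline{x_i}=1\}$ being an independent set, i.e., to $\overline{\vec{x}}$ satisfying $H$. Complementation is a bijection on configurations, so the number of fixed points of $S$ equals the number of satisfying assignments of $H$.

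The main obstacle is getting the $\BFV_2$ membership and the fixed-point/independent-set correspondence to line up simultaneously: $\BFV_2=\BFV\cap\BFS_{00}$ is the class of $\min$-free OR's, so the local function may not negate, yet the fixed-point condition for ``independent set'' semantics naturally wants a complemented reading of the states. The resolution, as sketched, is to push the negation into the semantics (read $0$ as ``in the set'') rather than into the function, and then the bijection is simply global complementation $\vec{x}\mapsto\overline{\vec{x}}$. I would double-check the degenerate cases (isolated vertices, variables not appearing, clauses with a repeated variable) to confirm all local functions land in $\BFV_2$ and the counting identity is exact, and conclude that $\#\PHSAT$ (in its all-negative form) reduces to $\#\FPE_\lookup(\BFV_2,\Forb_\preceq(K_{3,3},K^5))$, establishing $\#\P$-completeness.
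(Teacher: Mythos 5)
There is a genuine gap, and it is in the heart of the reduction: the fixed-point/independent-set correspondence you assert is false. With $f_{i_0}=\bigvee_{j\in N(i_0)}x_{i_j}$, the fixed-point condition at $i_0$ splits into $x_{i_0}\rightarrow\bigvee_j x_{i_j}$ and $\bigwedge_j(x_{i_j}\rightarrow x_{i_0})$. The second half forces: if $x_{i_0}=0$ then \emph{every} neighbor is $0$. Hence the zero-set of a fixed point is closed under taking neighborhoods, i.e., it is a union of connected components of $G$ --- not an arbitrary independent set, and its complement is not an arbitrary vertex cover. Already for $H=(\overline{x_1}\vee\overline{x_2})$ (one edge) your system $f_1=x_2$, $f_2=x_1$ has exactly the two fixed points $(0,0)$ and $(1,1)$, while $H$ has three satisfying assignments; the counts disagree on the smallest instance. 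Global complementation cannot repair this: by De Morgan it merely carries the $\vee$-system to the dual $\wedge$-system, whose fixed points again encode implications (Horn clauses), never the purely negative clauses $(\overline{x_u}\vee\overline{x_v})$ you need. This is not an accident --- $\BFV_2$-functions are monotone, and the equation $x=f(\vec{x})$ for a monotone disjunction can only ever impose Horn-type constraints (plus one long positive clause), so the independent-set source problem is structurally the wrong one here. (A smaller issue: the constants are not in $\BFV_2=\BFV\cap\BFS_{00}$, since $c_0$ is not $0$-separating, so ``absorbing'' an empty disjunction as $c_0$ is not available; one must use the identity $x_{i_0}$ instead.)

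The repair is the paper's route: reduce from $\#\PHSAT$ itself, exactly as in Lemma~\ref{lem:fpc-e2-planar-t}, and exploit the one long positive clause being absorbed rather than fighting it. Keep the same network; at vertex $i_0$ let $\{i_1,\dots,i_r\}$ be the variables with $(\overline{x_{i_j}}\vee x_{i_0})$ a clause of $H$ (the clauses with head $x_{i_0}$) and set $f_{i_0}=x_{i_0}\vee x_{i_1}\vee\cdots\vee x_{i_r}\in\BFV_2$. Including $x_{i_0}$ in the disjunction makes the direction $x_{i_0}\rightarrow f_{i_0}(\vec{x})$ vacuous, so the fixed-point condition at $i_0$ is exactly $\bigwedge_{j=1}^r(\overline{x_{i_j}}\vee x_{i_0})$, the conjunction of all clauses with head $x_{i_0}$. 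Since every Horn clause has exactly one head, conjoining over all vertices reproduces $H$ exactly, giving a bijection between fixed points and satisfying assignments and hence $\#\P$-hardness. Your instinct to dualize Lemma~\ref{lem:fpc-e2-planar-t} was right, but the duality acts on which endpoint of each implication ``collects'' the clause, not on the polarity of the source problem.
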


\begin{proof}
Again we reduce from $\#\PHSAT$ assuming 
that each variable occurs only four times in the formula. Let $H=C_1\land\dots\land C_m$ be a 
planar Horn-2CNF formula.
We construct the same network as in the proof of Lemma \ref{lem:fpc-e2-planar-t} 
on a given planar Horn-2CNF formula $H$ having the variables $x_1,\dots,x_n$. However, 
the local transition functions are specified as follows. For a vertex $i_0\in V$,
let $\{i_1,\dots,i_r\}$ be the set of all vertices such that $(\overline{x_{i_0}}
\vee x_{i_j})$ is a clause in $H$. Then, $f_{i_0}$  is the function given by 
the formula $H_{i_0}=x_{i_0}\vee x_{i_1}\vee \dots\vee x_{i_r}$ which clearly 
belongs to $\BFV_2$. It remains to verify the number of satisfying assignments
of $H$ equals the number of fixed-point configuration of $S_H$. This follows from
$(x_{i_0}\leftrightarrow \bigvee_{j=1}^r x_{i_j})
~\equiv~ \bigwedge_{j=1}^r (\overline{x_{i_0}}\vee x_{i_j}).$ Hence, 
$\#\PHSAT$ reduces to $\#\FPE_\lookup(\BFV_2,\Forb_\preceq(K_{3,3},K^5))$.
\end{proof}

We now turn our attention to proving that $\#\FPE_\lookup(\BFD_2,
\Forb_\preceq(K_{3,3},\allowbreak K^5))$ is intractable. 
Our proof is based on a reduction from the following problem, shown in~\cite{vadhan-2002} 
to be intractable: $\#4\Delta\textsc{-Planar Bipartite Vertex}\allowbreak \textsc{Cover}$, i.e.,  
compute, on a given planar bipartite graph $G$ of maximum vertex-degree at most four,
the number of vertex covers in $G$.
The reduction uses the following gadget, which increases the number of fixed points by a very large 
factor whenever two particular variables are unequal. 

\begin{definition}
For $h\in\Nat$, an \emph{$h$-amplifier} is a dynamical system $(G,F)$, where 
$G = (V,E)$, such that
\begin{eqnarray*} 
V &=_\df& \{a_0,\ldots,a_h\} ~ \cup ~ \{b_0,\ldots,b_h\} ~ \cup ~ \{c_0,\ldots, c_h\}\\
E &=_\df& \{\{u_r,j_r\}~|~u \in \{a,c\} \text{ and } r \in \{0,\ldots,h\}\} \\
  & & \cup ~ \{\{u_r, u_{r-1}\}~|~u \in \{a,c\} \text{ and } r \in \{1,\ldots,h\}\} \\
F &=_\df& \{f_{u_0}~|~u \in \{a,c\} \text{ and } f_{u_0} =_{\df} x_{u_0} \} \\
  & & \cup ~ \{f_{u_r}~|~u \in \{a,c\}, r \in \{1,\ldots, h\}, \text{ and } f_{u_r} =_{\df} x_{u_{r-1}}\} \\
  & & \cup ~ \{f_{b_r}~|~r \in \{0,\ldots,h\} \text{ and } f_{b_r} =_{\df} (x_{b_r} \vee x_{a_r}) \wedge (x_{b_r} \vee x_{c_r}) \wedge (x_{a_r} \vee x_{c_r}))\}.
\end{eqnarray*}\end{definition}


\begin{proposition}\label{prop:amp}
For each $h$-amplifier $A_h$, there is exactly one fixed point whenever $x_{a_0} = x_{c_0}$ and 
$2^{h+1}$ fixed points otherwise.
\end{proposition}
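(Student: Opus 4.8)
The plan is to analyze the fixed-point equations of an $h$-amplifier directly, exploiting the layered structure of the three ``chains'' $a_0,\dots,a_h$, $b_0,\dots,b_h$, and $c_0,\dots,c_h$. First I would observe that the local transition functions on the $a$-chain and $c$-chain force a rigid propagation: $f_{a_0}=x_{a_0}$ and $f_{a_r}=x_{a_{r-1}}$ (and symmetrically for $c$) mean that in any fixed point $\vec x$ we must have $x_{a_r}=x_{a_{r-1}}$ for all $r\in\{1,\dots,h\}$, hence $x_{a_0}=x_{a_1}=\cdots=x_{a_h}$, and likewise $x_{c_0}=\cdots=x_{c_h}$. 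So a fixed point of $A_h$ is completely determined by the three ``free'' bits $x_{a_0}$, $x_{c_0}$, and the values $x_{b_0},\dots,x_{b_h}$, subject only to the $b$-constraints; the constraints on the $a_0$ and $c_0$ vertices (namely $x_{a_0}=x_{a_0}$ and $x_{c_0}=x_{c_0}$) are trivially satisfied.

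Next I would turn to the $b$-vertices. The fixed-point equation at $b_r$ is $x_{b_r}=(x_{b_r}\vee x_{a_r})\wedge(x_{b_r}\vee x_{c_r})\wedge(x_{a_r}\vee x_{c_r})$, which is the self-dual monotone ``majority'' expression applied to $x_{b_r}, x_{a_r}, x_{c_r}$; that is, $x_{b_r}=\mathrm{maj}(x_{b_r},x_{a_r},x_{c_r})$. I would do the small case analysis on the pair $(x_{a_r},x_{c_r})$. If $x_{a_r}=x_{c_r}=1$, the equation reads $x_{b_r}=1$; if $x_{a_r}=x_{c_r}=0$, it reads $x_{b_r}=0$; and if $x_{a_r}\neq x_{c_r}$, the majority of $x_{b_r}$ with one $0$ and one $1$ is just $x_{b_r}$, so the equation $x_{b_r}=x_{b_r}$ holds for both choices of $x_{b_r}$. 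Combining with the previous paragraph: since $x_{a_r}=x_{a_0}$ and $x_{c_r}=x_{c_0}$ for every $r$, either $x_{a_0}=x_{c_0}$ (the ``equal'' case), in which case each $x_{b_r}$ is forced to the common value $x_{a_0}$, pinning down every coordinate and giving exactly one fixed point; or $x_{a_0}\neq x_{c_0}$, in which case each of the $h+1$ bits $x_{b_0},\dots,x_{b_h}$ is free, giving $2^{h+1}$ fixed points.

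Finally I would assemble these observations into the statement: in the case $x_{a_0}=x_{c_0}$ there is a unique fixed point (and it is realized — take all coordinates equal to that common bit), and in the case $x_{a_0}\neq x_{c_0}$ there are exactly $2^{h+1}$ fixed points, indexed by the free choices of $x_{b_0},\dots,x_{b_h}$ with $x_{a_r}\equiv x_{a_0}$, $x_{c_r}\equiv x_{c_0}$ throughout. No single step is genuinely hard here; the only thing requiring care is to verify that the $b_r$-equation really is the majority function and to handle the $x_{a_r}\neq x_{c_r}$ subcase correctly, since that is precisely where the ``amplification'' factor $2^{h+1}$ comes from — so I would write that truth-table check out explicitly rather than leave it implicit.
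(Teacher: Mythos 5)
Your proof is correct and takes essentially the same approach as the paper, which merely records in one line the characterization of fixed points as: all $x_{a_r}$ equal, all $x_{c_r}$ equal, and either $x_{a_0}\neq x_{c_0}$ or the $x_{b_r}$ are determined. Your explicit truth-table check of the majority gate at each $b_r$ is in fact slightly more careful than the paper's phrasing (which, read literally, only asserts that the $x_{b_r}$ must be mutually equal rather than forced to the common value $x_{a_0}=x_{c_0}$, and would thus overcount by one in the equal case), so nothing is missing.
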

\begin{proof} 
Note that by the definitions of the update functions there is a fixed point of $A_h$ 
if and only if $x_{a_0} = x_{a_1} = \cdots = x_{a_h}$, $x_{c_0} = x_{c_1} = \cdots = x_{c_h}$, and
$(x_{a_0} \neq x_{c_0}) \vee (x_{b_0} = x_{b_1} = \cdots = x_{b_h})$.\end{proof}

We now prove the intractability result.
\begin{lemma}\label{lem:d2}
$\#\FPE_\lookup(\BFD_2,\Forb_\preceq(K_{3,3},K^5))$ is intractable.
\end{lemma}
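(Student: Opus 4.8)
The plan is to reduce from the $\#\P$-complete problem $\#4\Delta\textsc{-Planar Bipartite Vertex Cover}$ of~\cite{vadhan-2002} (counting the vertex covers of a planar bipartite graph of maximum degree at most four); so fix such a graph $G=(V,E)$, and let $\#\mathrm{VC}(G)$ be its number of vertex covers, which I must compute from one call to an oracle for $\#\FPE_\lookup(\BFD_2,\Forb_\preceq(K_{3,3},K^5))$. Two features of $\BFD_2=\BFD\cap\BFM$ shape everything. First, every $f\in\BFD_2$ is self-dual, so the whole system is self-dual: fixed points come in complementary pairs, the count is always even, and for any prescribed vertex $w$ exactly half of the fixed points assign $0$ to $w$. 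Second, every $f\in\BFD_2$ is monotone, so $x_w\mapsto f(x_w,\dots)$ is non-decreasing in $x_w$; hence a partial assignment can never have \emph{zero} completions at a single vertex, and the only amplifying device available is the $h$-amplifier of Proposition~\ref{prop:amp}. In particular there is no gadget that simply forbids the configurations violating the vertex-cover constraint; instead I build a system $S_h$ whose fixed-point count is, for every $h$, a known polynomial in $2^{h+1}$ one of whose coefficients is $\#\mathrm{VC}(G)$, and extract that coefficient by taking $h$ large.

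\emph{The construction.} Give every $v\in V$ a \emph{free} variable $x_v$ whose transition function is the projection onto its own coordinate (self-dual and monotone, hence in $\BFD_2$), so $x_v$ is unconstrained in every fixed point. Also introduce a ground value; to avoid a high-degree hub that would destroy planarity, realise it as a \emph{ground network} --- a bounded-degree connected ``identity-chain'', routed along a planar embedding of $G$ so as to pass near every edge, each of whose vertices copies its predecessor (so all carry one common free value $x_g$ in any fixed point, and some ground vertex $g_e$ lies next to each edge $e$). For each $e=\{u,v\}\in E$ attach an \emph{edge gadget}: a helper vertex $m^e$ with transition function $(x_u\wedge x_v)\vee(x_u\wedge x_{g_e})\vee(x_v\wedge x_{g_e})$, i.e.\ $\mathrm{maj}(x_u,x_v,x_{g_e})$, with its own coordinate fictive (this function is self-dual and monotone, hence in $\BFD_2$), which forces $x_{m^e}$ to the majority of the three interface values; together with three $h$-amplifiers $A^e_u,A^e_v,A^e_g$, where the $a$-chain of $A^e_t$ copies $x_{m^e}$ and its $c$-chain copies $x_t$ (equivalently, redefine the transition functions of $a_0,c_0$ to be the corresponding projections, still in $\BFD_2$). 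By the computation in the proof of Proposition~\ref{prop:amp}, $A^e_t$ has $2^{h+1}$ completions if $\mathrm{maj}(x_u,x_v,x_{g_e})\neq x_t$ --- i.e.\ if $x_t$ is the strict minority among $x_u,x_v,x_{g_e}$ --- and exactly one completion otherwise. Since among three bits that are not all equal exactly one position holds the minority value, the edge gadget of $e$ has $\Phi(x_u,x_v,x_{g_e})$ completions, where $\Phi(a,b,c)=2^{h+1}$ if $a,b,c$ are not all equal and $\Phi(a,b,c)=1$ otherwise. Each edge gadget is a bounded planar piece placed near $e$ and attached only to $u,v,g_e$ (every amplifier being a planar ``ladder'' with two ports), all lookup tables have constant size, and with $h$ bounded by a polynomial the graph is planar and the reduction runs in polynomial time.

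\emph{The count and extraction.} No vertex constrains a free variable, so every fixed point of $S_h$ splits as a choice of $(\vec x,x_g)\in\{0,1\}^V\times\{0,1\}$ together with an independent completion of each edge gadget; writing $B=2^{h+1}$,
\[
\#\mathrm{FP}(S_h)=\sum_{x_g\in\{0,1\}}\ \sum_{\vec x\in\{0,1\}^V}\ \prod_{e=\{u,v\}\in E}\Phi(x_u,x_v,x_g).
\]
Because $\Phi$ is invariant under complementing its arguments (equivalently, by the self-duality of $S_h$), the $x_g=0$ and $x_g=1$ inner sums are equal, so $\#\mathrm{FP}(S_h)=2\sum_{\vec x}\prod_{e=\{u,v\}}\Phi(x_u,x_v,0)$; and $\Phi(x_u,x_v,0)=1$ exactly when $x_u=x_v=0$ and equals $B$ otherwise --- that is, exactly when $\vec x$, read as a vertex set, covers $e$. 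Hence, with $s(\vec x):=\#\{e\in E:\vec x\text{ covers }e\}$ and $c_j:=\#\{\vec x:s(\vec x)=j\}$,
\[
\#\mathrm{FP}(S_h)=2\sum_{\vec x\in\{0,1\}^V}B^{\,s(\vec x)}=2\sum_{j=0}^{|E|}c_j\,B^{\,j},\qquad c_{|E|}=\#\mathrm{VC}(G),\qquad\sum_{j=0}^{|E|}c_j=2^{|V|}.
\]
Taking $h:=|V|$ gives $B=2^{|V|+1}>2^{|V|}\ge\sum_{j<|E|}c_j$, so $\sum_{j<|E|}c_jB^{\,j}<B^{|E|}$ and therefore
\[
\#\mathrm{VC}(G)=c_{|E|}=\Bigl\lfloor\ \#\mathrm{FP}(S_{|V|})\,\big/\,(2B^{|E|})\ \Bigr\rfloor
\]
(the case $E=\emptyset$ being trivial), which is computed in polynomial time from the oracle answer, completing the reduction. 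The step I expect to be the real obstacle is the edge gadget: one has to notice that monotonicity rules out any hard edge constraint, that a ground variable converts ``$x_u\vee x_v$'' into ``$x_u,x_v,x_g$ not all equal'', that this predicate is exactly ``one of the three is the minority'' and is thus detected by three amplifiers run against the common majority value, and that the ground signal must be \emph{propagated} rather than hubbed in order to keep the graph planar --- after which the interpolation-free extraction above is routine.
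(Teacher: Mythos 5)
There is a genuine gap, and it is fatal to the construction as written: the \emph{connected} ground network cannot be routed planarly. In your system the original edges of $G$ are replaced by paths $u - m^e - v$ (plus planar ladder decorations), so the underlying graph contains a subdivision of $G$; the ground network is a connected subgraph, vertex-disjoint from this subdivision, with a vertex $g_e$ adjacent to $m^e$ for \emph{every} edge $e$ of $G$. Contracting the ground network to a single vertex $g$ therefore yields, as a minor, a subdivision of $G$ together with an apex $g$ joined to every subdivision vertex $m^e$. In any planar embedding, $g$ sits inside a single face and can only be adjacent to the $m^e$ lying on that face's boundary. Take $G$ to be the $3$-dimensional cube graph (planar, bipartite, $3$-regular, hence a legal input): it is $3$-connected, so its face structure is forced, every face is a $4$-cycle, and $g$ can reach at most $4$ of the $12$ subdivision vertices. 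The minor is non-planar, so $S_h \notin \Forb_\preceq(K_{3,3},K^5)$ and the reduction is invalid. Nor can you simply drop connectivity of the ground: your extraction rests on the identity $\#\mathrm{FP}(S_h)=2\sum_{\vec x}\prod_e\Phi(x_u,x_v,0)$, which needs all gadgets to see \emph{one} common free bit $x_g$; with one ground component per face, the coefficient of $B^{|E|}$ counts ``mixed'' covers (each edge covered by the $1$-side or the $0$-side according to its face's ground bit, summed over all ground assignments), not vertex covers.

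Everything else is sound and close in spirit to the paper's proof: the same source problem, the same majority-of-three device (a fixed point requires the auxiliary value to agree with one of the two endpoints), the same amplifier to blow up unwanted configurations, and an arithmetic extraction at the end. The paper's way around the planarity obstruction is precisely to avoid any global ground: the per-edge auxiliary variable $x_{\{u,v\}}$, with transition the majority of $x_u,x_{\{u,v\}},x_v$, acts as a \emph{local} ground; fixed points in which all these auxiliaries are equal number exactly $2\#\mathrm{VC}(G)$; and $(m+1)$-amplifiers are placed only between pairs of auxiliaries that share a vertex of $G$ and a face of the embedding (hence attachable planarly), forcing every fixed point with non-constant auxiliaries to contribute a multiple of $2^{m+2}$. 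The answer is then read off modulo $2^{m+2}$ rather than as a leading coefficient. Rescuing your version essentially requires the same move: replace the single ground by per-edge grounds synchronized pairwise along faces, and switch from magnitude-based to modular extraction.
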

\begin{proof}
Let $H = (U,D)$ be a planar, bipartite graph of degree at most four, where $m = \|U\|$ and $n = \|D\|$.
Define $S = (G,F)$ to be the dynamical system where $G = (V,E)$, $V =_\df U \cup D$,
$E =_\df \{\{i, \{i,j\}\}~|~\{i,j\} \in D\}$, and 
\begin{eqnarray*}
F &=_\df& \{f_i~|~i \in U \text{ and } f_i=_{\df} x_i\} \\
  &  & \cup ~ \{f_{\{i,j\}}~|~\{i,j\} \in D \text{ and } f_{\{i,j\}} =_\df (x_{\{i,j\}} \vee x_i) \wedge (x_{\{i,j\}} \vee x_j) \wedge (x_i \vee x_j)\}.
\end{eqnarray*}
Clearly the graph $G$ is planar, as it has $H$ as a topological minor and the update functions in $F$ are in $\BFD_2$. 
How do the fixed points in $S$ correspond to the vertex covers of $H$? Note that, for each $\{i,j\} \in D$, 
$f_{\{i,j\}}(x_i,x_{\{i,j\}},x_j) = x_{\{i,j\}} \Longleftrightarrow x_{\{i,j\}} = x_i \vee x_{\{i,j\}} = x_j$.
We want to regard edge $\{i,j\}$ as being covered in $H$ whenever this is so. 
Indeed, the number of fixed points in $S$ such that all the variables in $\{x_{\{i,j\}}~|~\{i,j\} \in D\}$ are 
equal is twice the number of vertex covers that $H$ has. 
Of course, $S$ may have additional fixed points, i.e., fixed points where the values of the variables in 
$\{x_{\{i,j\}}~|~\{i,j\} \in D\}$ are not equal. 
These ``bad'' fixed points do not correspond to vertex covers in $H$. 
To help ``filter'' the bad fixed points out, we add $h$-amplifiers for sufficiently large $h$ to $S$.

Fix a planar layout of $G$. 
For each $i$, $j$, and $k$ such that the vertices $\{i,j\}$, $j$, and $\{j,k\}$ all lie on the same boundary 
of some face in the layout, add an $(m+1)$-amplifier by identifying $a_0$ with $\{i,j\}$ and $c_0$ with $\{j,k\}$ (or 
\emph{vice-versa}). 
Call the resulting dynamical system $S'$. 
By Proposition~\ref{prop:amp}, $S'$ has exactly one fixed point for each fixed point of $S$ where the edge 
variables $\{x_{\{i,j\}}~|~\{i,j\} \in D\}$ are all equal. 
Note that there are at most $2^{m+1}$ such fixed points. 
For any fixed point in $S$ where the edge variables are unequal, there must exist $i$, $j$, and $k$ such that 
$\{i,j\}$, $j$, and $\{j,k\}$ all lie on the same boundary of some face in the layout and $x_{\{i,j\}} \neq x_{\{j,k\}}$. 
But then, by Proposition~\ref{prop:amp} the number of fixed points in $S'$ that correspond to this fixed point in
$S$ is a multiple of $2^{m+2}$.
So twice the number of vertex covers of $H$ is equal to the number of fixed points in $S'$ modulo $2^{m+2}$. 
Note that the graph in $S'$ is planar (as amplifiers are always planar) and no update function has more than ten 
arguments, so the lookup table can be constructed in polynomial time. 
Note also that each of the update functions in $S'$ is in $\BFD_2$.
Hence, $\#4\Delta\textsc{-Planar Bipartite Vertex Cover}$ reduces to $\#\FPE(\BFD_2,\Forb_\preceq(K_{3,3},K^5))$
in polynomial time.
\end{proof}



Finally, we combine the results to obtain the following conditional 
dichotomy theorem.

\begin{theorem}\label{thm:fpc-t}
Let $\F$ be a Post class of boolean functions and let $\XG$ be a graph class 
closed under taking minors. 
If ($\F\supseteq \BFV_2$ or $\F\supseteq \BFE_2$ or $\F\supseteq\BFD_2$) 
and $\XG\supseteq\Forb_\preceq(K_{3,3},K^5)$, then $\#\FPE_\lookup(\F,\XG)$ is 
intractable, otherwise $\#\FPE_\lookup\allowbreak(\F,\XG)$ is tractable.
\end{theorem}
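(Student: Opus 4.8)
The plan is to prove the dichotomy by combining the intractability lemmas already established with a case analysis over the countably many Post classes and minor-closed graph classes, using the lattice structure of Post classes and the Graph Minor Theorem as organizing tools.

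\medskip
\noindent\textbf{The intractable side.} Suppose $\F$ is a Post class with $\F\supseteq\BFV_2$, $\F\supseteq\BFE_2$, or $\F\supseteq\BFD_2$, and $\XG\supseteq\Forb_\preceq(K_{3,3},K^5)$ (the class of planar graphs). Then each of Lemmas~\ref{lem:fpc-e2-planar-t}, \ref{lem:fpc-v2-planar-t}, and \ref{lem:d2} exhibits $\#\P$-hardness of $\#\FPE_\lookup$ for the relevant minimal transition class over exactly the planar graphs. Since the hardness instances produced there use only functions from $\BFE_2$ (resp.\ $\BFV_2$, $\BFD_2$) and planar graphs, the same instances witness hardness of $\#\FPE_\lookup(\F,\XG)$ for any larger $\F$ and any $\XG$ containing the planar graphs; membership in $\#\P$ is noted before the subsection. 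Hence this direction is immediate once one observes monotonicity of the problem in both arguments.

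\medskip
\noindent\textbf{The tractable side.} Assume the hypothesis of intractability fails; I must show $\#\FPE_\lookup(\F,\XG)$ is polynomial-time solvable. There are two sub-cases. First, if $\XG\not\supseteq\Forb_\preceq(K_{3,3},K^5)$, then since $\XG$ is minor-closed it equals $\Forb_\preceq(\XX)$ for a finite $\XX$ (Graph Minor Theorem), and $\XG$ must exclude some planar graph $X$ (otherwise it would contain all planar graphs); then $\XG\subseteq\Forb_\preceq(X)$ for a planar $X$, so $\XG$ has bounded treewidth by the Robertson--Seymour excluded-planar-minor theorem, and Lemma~\ref{lem:fpc-bf-planar-t} (whose proof only uses bounded treewidth, not planarity per se) gives tractability even for $\F=\BF$. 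Second, suppose $\XG\supseteq\Forb_\preceq(K_{3,3},K^5)$ but $\F\not\supseteq\BFV_2$, $\F\not\supseteq\BFE_2$, and $\F\not\supseteq\BFD_2$. Here I would invoke the structure of Post's lattice: I claim that a Post class $\F$ containing none of $\BFE_2$, $\BFV_2$, $\BFD_2$ must be contained in $\BFL$ (the linear functions). Indeed, $\BFE_2$, $\BFV_2$, and $\BFD_2$ are precisely the minimal clones lying above $\BFN$ in the three ``directions'' of the lattice corresponding to the ${\rm AND}$-, ${\rm OR}$-, and majority-type functions; any clone not contained in $\BFL$ must contain at least one of these (this is read off directly from Post's lattice diagram together with the listed logical bases and inclusion structure). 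Then Lemma~\ref{lem:fpc-l-all-t} gives tractability of $\#\FPE_\lookup(\BFL,\Forb_\preceq(\emptyset))$, hence of $\#\FPE_\lookup(\F,\XG)$ for any $\F\subseteq\BFL$ and any $\XG$.

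\medskip
\noindent\textbf{Main obstacle.} The routine parts are the monotonicity observations and the appeals to the already-proved lemmas. The one step requiring care is the clone-theoretic claim that every Post class avoiding $\BFE_2$, $\BFV_2$, and $\BFD_2$ is a subclass of $\BFL$: this must be verified against Post's lattice, checking that there is no clone which is simultaneously incomparable to or below none of the three named clones yet not linear. I would handle this by enumerating the relevant meet-irreducible classes from the list in the excerpt, noting that any non-linear clone contains a function outside $\BFL$, which forces (via the closure operations and the inclusion structure $\BFE\subset\BFM$, $\BFV\subset\BFM$, $\BFN\subset\BFL$) the presence of a majority-type self-dual monotone function, an ${\rm AND}$, or an ${\rm OR}$ of the appropriate separating level, i.e.\ membership of $\BFD_2$, $\BFE_2$, or $\BFV_2$. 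This is the crux of the whole argument and the place where the specific shape of Post's lattice (and the companion classification in \cite{kosub-2008} for fixed-point existence) is genuinely used.
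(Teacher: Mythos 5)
Your proposal is correct and follows essentially the same route as the paper: the hardness lemmas for $\BFE_2$, $\BFV_2$, $\BFD_2$ over planar graphs cover the intractable side by monotonicity, and the tractable side splits into the case $\XG\subseteq\Forb_\preceq(X)$ for a planar $X$ (handled by Lemma~\ref{lem:fpc-bf-planar-t}) and the case $\F\subseteq\BFL$ (handled by Lemma~\ref{lem:fpc-l-all-t}), justified by the fact that $\BFL$ is the maximal Post class containing none of $\BFE_2$, $\BFV_2$, $\BFD_2$. Your extra discussion of how to verify that lattice fact is a sound elaboration of a step the paper simply asserts.
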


\begin{proof}
If ($\F\supseteq \BFV_2$ or $\F\supseteq \BFE_2$ or $\F\supseteq\BFD_2$) 
and $\XG\supseteq\Forb_\preceq(K_{3,3},K^5)$, then $\#\FPE_\lookup\allowbreak
(\F,\XG)$ is $\#\P$-complete by Lemma \ref{lem:fpc-e2-planar-t}, Lemma 
\ref{lem:fpc-v2-planar-t}, and by the assumption made for $\BFD_2$. 
Suppose the premise is not satisfied. 
First, assume that $\F\not\subseteq\BFV_2$, $\F\not\subseteq\BFE_2$, and 
$\F\not\subseteq \BFD_2$.
The maximal Post class having this property is $\BFL$.  
By Lemma \ref{lem:fpc-l-all-t}, $\#\FPE_\lookup(\BFL,\Forb_\preceq(\emptyset))$ 
is tractable. 
It remains to consider the case $\XG\not\supseteq\Forb_\preceq(K_{3,3},K^5)$. 
That is $\XG\subseteq\Forb_\preceq(X)$ for some planar graph $X$. 
Lemma \ref{lem:fpc-bf-planar-t} shows that $\#\FPE_\lookup(\BF,\XG)$ is 
solvable in polynomial time.
\end{proof}

\subsection{Succinctly Represented Local Transition Functions}

In this section we prove a dichotomy theorem for the fixed-point counting 
problem when transition are given by formulas or circuits. As usual, the size 
of formula is the number of symbols from the basis used to encode the formula, 
the size of a circuit is the number of gates (from the basis) it consists of 
(including the input gates).\footnote{
Note that, though the fan-in's of the logical bases cannot be bounded by
one constant for all Post classes, for each Post class there is a logical basis 
of bounded fan-in. In particular, those classes which occur in the proofs of this 
section have bases of fan-in at most three.}
Both succinct representations of functions lead to the same 
result.
We only prove special results for the case of formula representations. The 
corresponding results for circuit representations follow easily.

Again we start with gathering the tractable cases.

\begin{lemma}\label{lem:fpc-l-all-f}
$\#\FPE_\formula(\BFL,\Forb_\preceq(\emptyset))$ is solvable in polynomial time.
\end{lemma}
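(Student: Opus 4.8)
The plan is to mimic the argument of Lemma~\ref{lem:fpc-l-all-t}, which handled the lookup-table version, and check that nothing breaks when transition functions are presented as formulas over the basis $\{\oplus,0,1\}$ of $\BFL$. The key observation is that the complexity-theoretic heart of that proof — reducing fixed-point counting to counting solutions of a linear system over $\mathbb{Z}_2$ and invoking Gaussian elimination — is entirely representation-independent; the only thing that changes is the cost of extracting, from each local transition function $f_i$, the coefficient vector $(a_0,a_1,\dots,a_{d_i+1})$ of the affine form $f_i(\vec{x}) = a_0 \oplus a_1 x_1 \oplus \cdots$. So the single new point to verify is that this extraction is polynomial-time when $f_i$ is given by a formula over $\{\oplus,0,1\}$.

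First I would recall that for an $(\BFL,\Forb_\preceq(\emptyset))$-system $S=(G,\{f_1,\dots,f_n\})$ with $f_i$ given as a formula $F_i$ over the basis $\{\oplus,0,1\}$, each $F_i$ can be evaluated symbolically: parse $F_i$ and compute, bottom-up, the coefficient vector over $\mathbb{Z}_2$ of the affine function it computes. Concretely, a leaf labelled $0$ or $1$ contributes the constant $0$ or $1$; a leaf labelled by a variable $x_j$ (where $j$ ranges over $\{i\}\cup N_G(i)$) contributes the unit vector; and an internal $\oplus$-node returns the coordinatewise XOR of the vectors returned by its two children, i.e.\ addition in $\mathbb{Z}_2^{d_i+2}$. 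Each such vector has length $d_i+2 \le n+1$, each node does $O(n)$ bit operations, and $F_i$ has size polynomial in the input, so the whole pass runs in time polynomial in $|S|$. Thus in polynomial time we recover, for every $i$, constants $a_0^{(i)},\dots$ so that $f_i(\vec x) = a_0^{(i)} \oplus \bigoplus_j a_j^{(i)} x_j$.

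Having done this, I would finish exactly as in Lemma~\ref{lem:fpc-l-all-t}: a configuration $\vec x$ is a fixed point of $S$ iff $f_i(\vec x) = x_i$ for every $i$, i.e.\ iff $\vec x$ satisfies the linear system over $\mathbb{Z}_2$ consisting of the $n$ equations $x_i \oplus a_0^{(i)} \oplus \bigoplus_j a_j^{(i)} x_j = 0$. The number of solutions of a linear system over $\mathbb{Z}_2$ is either $0$ (if inconsistent) or $2^{n - r}$, where $r$ is the rank of the coefficient matrix, and both the consistency check and the rank can be computed by Gaussian elimination in time polynomial in $n$ (cf.~\cite{creignou-hermann-1996}). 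Reporting the resulting count completes the algorithm.

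I do not expect any real obstacle here; the only thing requiring a line of care is confirming that the formula-to-coefficient-vector conversion is genuinely polynomial — in particular that intermediate coefficient vectors stay of size $O(n)$ rather than blowing up — which is immediate because addition in $\mathbb{Z}_2^{d_i+2}$ neither lengthens vectors nor introduces carries. (The same remark shows the circuit version goes through verbatim, evaluating each gate's affine form once in topological order, which is why the paper remarks the circuit result follows easily.) Everything else is literally the proof of Lemma~\ref{lem:fpc-l-all-t}.
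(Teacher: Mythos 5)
Your proposal is correct and follows exactly the paper's route: the paper's own proof simply notes that a formula (or circuit) over the basis $\{\oplus,0,1\}$ can be transformed in polynomial time into the linear system over $\mathbb{Z}_2$ from Lemma~\ref{lem:fpc-l-all-t}, and your bottom-up coefficient-vector extraction is just that transformation spelled out. No gaps.
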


\begin{proof}
Similar to the proof of Lemma \ref{lem:fpc-l-all-t} by noting
that each boolean circuit $C$ over the base $\{\oplus,1,0\}$ can be easily
transformed (in polynomial time in the number of gates of $C$) into the described
system of linear equations over $Z_2$.
\end{proof}

\begin{lemma}\label{lem:fpc-e-planar-f}
Let $X$ be a planar graph. Then, $\#\FPE_\formula(\BFE,\Forb_\preceq(X))$ is
solvable in polynomial time.
\end{lemma}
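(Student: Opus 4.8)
The plan is to exploit that, for planar $X$, the class $\Forb_\preceq(X)$ has bounded treewidth (\cite{robertson-seymour-1986}), and to count fixed points by a dynamic program over a tree decomposition after rewriting the fixed-point equations as local conditions on the set of vertices a configuration sets to~$1$.

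First I would preprocess the instance. In an $(\BFE,\Forb_\preceq(X))$-system $S=(G,F)$ with $G=(V,E)$, each $f_i$ is a formula over the basis $\{\land,0,1\}$, so in linear time it normalizes either to the constant~$0$ or to a conjunction $\bigwedge_{j\in J_i}x_j$ of a subset $J_i$ of its input variables (the empty conjunction being the constant~$1$); by definition of a dynamical system, $J_i\subseteq\{i\}\cup N_G(i)$. Writing $W=\{i\mid x_i=1\}$, a configuration $\vec x$ is a fixed point of $S$ if and only if $i\notin W$ for every $i$ with $f_i\equiv 0$, and $i\in W\iff J_i\subseteq W$ for every other~$i$. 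Hence the fixed points of $S$ are in bijection with the sets $W\subseteq V$ meeting this family of conditions, each of which depends only on the restriction of $W$ to a single closed neighborhood $\{i\}\cup N_G(i)$.

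Next I would count these sets $W$ by dynamic programming over a tree decomposition of $G$ of width $k=O(1)$, which can be computed in polynomial time since $G\in\Forb_\preceq(X)$ (\cite{robertson-seymour-1986}, together with a standard bounded-treewidth algorithm; cf.~\cite{flum-grohe-2004}). The one subtlety is that a high-degree vertex $i$ produces a condition whose scope $\{i\}\cup J_i$ need not be contained in any single bag, so it cannot be verified at one place. I would resolve this by augmenting the DP state of each bag, beyond the $W$-membership of the bag's vertices, with one bit per vertex $i$ of the bag recording whether every element of $J_i$ seen so far lies in~$W$: since each edge $\{i,j\}$---hence each pair $(i,j)$ with $j\in J_i\setminus\{i\}$---lies in some bag, every conjunct is incorporated when that bag is processed, and at the moment $i$ is forgotten the accumulated bit equals $[\,J_i\subseteq W\,]$, so the condition on $i$ can be enforced then. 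The state space per bag is $2^{O(k)}$, so the number of admissible $W$, and hence the number of fixed points of $S$, is computed in polynomial (indeed linear) time. Alternatively, one may note that ``$W$ is the $1$-set of a fixed point of $S$'' is expressed by a fixed monadic second-order formula over $G$ equipped with a relation encoding the sets $J_i$, and invoke the metatheorem on counting MSO-definable sets over graphs of bounded treewidth.

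I expect the main obstacle to be exactly this unboundedness of the arity of $\BFE$-functions at high-degree vertices: it is the reason the formula case is not immediately reduced to the lookup-table case of Lemma~\ref{lem:fpc-bf-planar-t} (where polynomial-size tables already limit the degree), and it forces either the accumulator bookkeeping above or an appeal to an MSO metatheorem. The remaining ingredients---normalizing the formulas, the bijection with $1$-sets, and assembling the tree-decomposition DP---are routine.
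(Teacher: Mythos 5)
Your proof is correct, and it takes a genuinely different route from the paper's. The paper normalizes the $\{\land,0,1\}$-formulas just as you do, but then builds an auxiliary directed graph $A(S)$ with an arc $(i,j)$ whenever $i\in J_j$, contracts its strongly connected components (on which every fixed point must be constant), and counts the solutions of a \emph{binary} CSP on the quotient whose single relation is the implication $\{(0,0),(1,0),(1,1)\}$; since that constraint graph is a minor of $G$, it has bounded treewidth and the count follows from the Flum--Grohe counting algorithm. You instead characterize fixed points by the biconditional $i\in W\leftrightarrow J_i\subseteq W$ and run a bespoke dynamic program (or invoke an MSO counting metatheorem) directly on a tree decomposition of $G$, carrying one accumulator bit per bag vertex so that the unbounded-arity conjunct can be evaluated when the vertex is forgotten. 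What the paper's route buys is brevity: everything is delegated to counting for bounded-treewidth CSPs. What your route buys is that it explicitly enforces \emph{both} directions of $x_i=\bigwedge_{j\in J_i}x_j$: the direction $J_i\subseteq W\Rightarrow i\in W$ is precisely the unbounded-arity obstacle you flag, and it is not captured by the paper's binary implication constraints unless $i\in J_i$ (for instance, with $f_2=x_1$ on a single edge, the paper's CSP admits the spurious solution $x_1=1$, $x_2=0$, which is not a fixed point). So your accumulator/MSO treatment of that direction is not merely an alternative; it addresses a point the published argument glosses over, and your proof stands on its own.
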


\begin{proof} Since $X$ is planar, there exists a $k\in\Nat$ such that for all $G\in
\Forb_\preceq(X)$, the treewidth of $G$ is at most $k$. 
Let $S=(G,\{f_1,\dots,f_n\})$ be a dynamical system such that $G=(V,E)\in 
\Forb_\preceq(X)$ and for all $i\in V$, the local transition function $f_i$ is one of 
the constant functions $c_0$ or $c_1$, or is represented by a formula 
$H_i=\bigwedge_{j\in J_i} x_j$, where $J_i\subseteq N_G(i)\cup \{i\}$. 
Without loss of generality, we may assume that there is no $i\in V$ such that 
$f_i\equiv c_1$ or $f_i\equiv c_0$. 
(Otherwise, an obvious procedure exists to eliminate such vertices.) 
We define the directed graph $A(S)$ to consist of $S$'s vertex set $V$ and the 
edge set $E'=_{\rm def} \{~(i,j)~|~i,j\in V, i\in J_j~\}$. 
Note that $A(S)$ is allowed to have loops. 
Observe that for all vertices $i,j\in V$ and all fixed-point configurations 
$\vec{x}$ it holds that if $x_i=0$ then $x_j=0$. An easy consequence is that 
if $C=\{i_1,\dots,i_r\}$ is a strongly connected component of $A(S)$ and $\vec{x}$ is 
a fixed-point configuration, then $x_{i_1}=\cdots=x_{i_r}$. 
Let $\{C_1,\dots,C_\ell\}$ be the set of all strongly connected components of 
$A(S)$. 
Then, the number of fixed-point configurations of $S$ is equal to the number of
satisfying assignments of the constraint satisfaction problem $\CSP(S)=(W,\XD,\XC)$ 
defined by $W=_{\rm def} \{x_1,\dots,x_\ell\}$, $\XD =_{\rm def} \{0,1\}$, and
$\XC =_{\rm def} \{~Ex_ix_j~|~\textrm{ there are $u\in C_i$ and $v\in C_j$ such 
that $(u,v)\in E'$}~\}$ where for all $i,j$ such that $Ex_ix_j\in\XC$, $E_{ij}=_{\rm def} 
\{~(0,0), (1,0), (1,1)~\}$\footnote{A constraint satisfaction problem (CSP) 
consists of triples $(X,\XD,\XC)$, where $X=\{x_1,\dots,x_n\}$ is the set of 
variables, $\XD$ is the domain of the variables, $\XC$ is a set of constraints
$Rx_{i_1},\dots,x_{i_k}$ having associated corresponding relations 
$R_{{i_1},\dots,{i_k}}$. The set $\XC$ of constraints is listed by pairs 
$\langle Rx_{i_1},\dots,x_{i_k},\allowbreak R_{{i_1},\dots,{i_k}} \rangle$.
A solution for $(X,\XD,\XC)$ is an assignment $I:X\to\XD$ such that
$(I(x_{i_1}),\dots,I(x_{i_k}))\in R_{{i_1},\dots,{i_k}}$ for all constraints
$Rx_{i_1},\dots,x_{i_k}\in \XC$. The (primal) constraint graph for $(X,\XD,\XC)$ 
consists of the vertex set $X$ and the edge set $\{ \{x_i,x_j\}~|~\textrm{$x_i$ 
and $x_j$ occur in the same constraint of $\XC$}\}$.}.
Note that the constraint graph of $\CSP(S)$ (up to being oriented) is a minor 
of the network of $S$. It follows that the constraint graph
has treewidth at most $k$. Hence, using the algorithms in \cite{flum-grohe-2004},
the number of fixed-point configurations can be computed in polynomial time. 
Consequently, $\#\FPE_\formula(\BFE,\Forb_\preceq(X))$ can be solved in polynomial time.
\end{proof}

\begin{lemma}\label{lem:fpc-v-planar-f}
Let $X$ be a planar graph. Then, $\#\FPE_\formula(\BFV,\Forb_\preceq(X))$ is
solvable in polynomial time.
\end{lemma}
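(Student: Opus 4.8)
The plan is to mirror the proof of Lemma~\ref{lem:fpc-e-planar-f} for $\BFE$, using the fact that $\BFV$ is the ``dual'' class of $\BFV$: its logical basis is $\{\vee,0,1\}$, so a local transition function $f_i$ in $\BFV$ is (after discarding the constant functions, which as before can be eliminated by an obvious preprocessing step) represented by a formula $H_i = \bigvee_{j\in J_i} x_j$ with $J_i \subseteq N_G(i)\cup\{i\}$. The key observation that replaces the one used for $\BFE$ is monotonicity in the opposite direction: for all vertices $i,j \in V$ with $i\in J_j$ and all fixed-point configurations $\vec{x}$, if $x_i = 1$ then $x_j = 1$. Consequently, if $C = \{i_1,\dots,i_r\}$ is a strongly connected component of the directed graph $A(S)$ (defined exactly as in the previous proof, with edge set $\{(i,j) \mid i\in J_j\}$), then all the variables in $C$ take the same value in any fixed point.

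With that in place, the rest of the argument is identical in structure. First I would define $A(S)$ and its strongly connected components $C_1,\dots,C_\ell$, observing that collapsing each component to a single variable is sound because fixed-point configurations are constant on each component. Then I would build the constraint satisfaction problem $\CSP(S) = (W,\XD,\XC)$ with $W = \{x_1,\dots,x_\ell\}$, $\XD = \{0,1\}$, and one constraint $E x_i x_j$ for every pair of components joined by an edge of $E'$; the only change from the $\BFE$ case is that the associated relation is now $E_{ij} =_{\rm def} \{(0,0),(0,1),(1,1)\}$, reflecting the implication ``$x_i = 1 \Rightarrow x_j = 1$'' rather than ``$x_i = 0 \Rightarrow x_j = 0$''. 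One checks that fixed points of $S$ are in bijection with satisfying assignments of $\CSP(S)$, that the constraint graph is (up to orientation) a minor of $G$ and hence of treewidth at most $k$ where $k$ bounds the treewidth of $\Forb_\preceq(X)$ (such a $k$ exists since $X$ is planar, by \cite{robertson-seymour-1986}), and therefore that the number of satisfying assignments — equivalently, the number of fixed points — can be computed in polynomial time via \cite{flum-grohe-2004}.

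I do not expect a real obstacle here; the proof is the exact dual of Lemma~\ref{lem:fpc-e-planar-f}, and the only things to get right are the direction of the monotonicity implication and the corresponding choice of the binary relation $E_{ij}$. If one wanted to be maximally terse, one could even avoid redoing the argument entirely: given an instance $S=(G,F)$ of $\#\FPE_\formula(\BFV,\Forb_\preceq(X))$, replace each formula $H_i = \bigvee_{j\in J_i} x_j$ by its De Morgan transform and complement every variable, producing an instance of $\#\FPE_\formula(\BFE,\Forb_\preceq(X))$ on the same graph $G$ with the same number of fixed points (the complementation map $\vec{x}\mapsto \vec{1}-\vec{x}$ is a bijection between the two fixed-point sets). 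The tractability then follows directly from Lemma~\ref{lem:fpc-e-planar-f}. Either route gives the result; I would present whichever the authors find cleaner, probably a one-line appeal to this duality with the $\BFE$ case.
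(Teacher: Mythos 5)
Your proposal is correct and, in its second (``terse'') variant, is exactly the paper's proof: the authors reduce $\#\FPE_\formula(\BFV,\Forb_\preceq(X))$ to $\#\FPE_\formula(\BFE,\Forb_\preceq(X))$ by the De Morgan/complementation duality on the same graph and invoke Lemma~\ref{lem:fpc-e-planar-f}. Your first variant (re-running the SCC/CSP argument with the implication direction and the relation $E_{ij}$ dualized) is just the same argument unrolled, so there is no substantive difference in approach.
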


\begin{proof}
The case of $\BFV$ is dual to the case of $\BFE$. Indeed, suppose we have
a dynamical system $S=(G,\{f_1,\dots,f_n\})$ such that $G=(V,E)\in\Forb_\preceq(X)$ 
and for all $i\in V$, $f_i$ is constant or represented by a formula $H_i=
\bigvee_{j\in J_i} x_j$ where $J_i\subseteq N_G(i)\cup \{i\}$. Replace
each $\vee$ by $\land$, $0$ by $1$, and $1$ by $0$. Obviously, this gives
a dynamical system having the same number of fixed-point configurations as $S$.
Thus, $\#\FPE_\formula(\BFV,\Forb_\preceq(X))$ reduces to 
$\#\FPE_\formula(\BFE,\Forb_\preceq(X))$. 
Hence, by Lemma \ref{lem:fpc-e-planar-f},
$\#\FPE_\formula(\BFV,\allowbreak\Forb_\preceq(X))$ can be solved in polynomial time.
\end{proof}

\begin{lemma}\label{lem:fpc-bf-bdegree-f}
\sloppy Let $X$ be a graph with a vertex cover of size one. Then, $\#\FPE_\formula\allowbreak
(\BF,\Forb_\preceq(X))$ is solvable in polynomial time.
\end{lemma}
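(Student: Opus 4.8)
The statement to prove is that $\#\FPE_\formula(\BF,\Forb_\preceq(X))$ is in polynomial time whenever $X$ has a vertex cover of size one. By the characterization recalled in the excerpt (from \cite{cattell-dinneen-1994} and \cite{kosub-2008}), $X$ has a vertex cover of size one if and only if $\Forb_\preceq(X)$ has bounded degree; fix the bound $k\in\Nat$. So what I would actually prove is: for each fixed $k$, counting fixed points of boolean dynamical systems whose network has maximum degree at most $k$ and whose local transition functions are given by arbitrary formulas is in $\P$.

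**Main idea.** The crucial observation is that the fixed-point condition is entirely local: $\vec{x}$ is a fixed point of $S=(G,\{f_i\})$ if and only if for every vertex $i$, $f_i(x_i, x_{i_1},\dots,x_{i_{d_i}})=x_i$, where $i_1,\dots,i_{d_i}$ are the neighbors of $i$. Since $d_i\le k$, each such constraint involves at most $k+1$ variables. I would therefore set up a constraint satisfaction instance $\CSP(S)$ with variable set $V$, domain $\{0,1\}$, and one constraint $R_i x_i x_{i_1}\cdots x_{i_{d_i}}$ per vertex $i$, whose relation $R_i\subseteq\{0,1\}^{d_i+1}$ is the set of tuples on which $f_i$ returns the value assigned to its own coordinate. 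Each relation $R_i$ has at most $2^{k+1}$ tuples and can be computed in polynomial time by evaluating the formula $f_i$ on all $2^{d_i+1}\le 2^{k+1}$ inputs. The fixed points of $S$ correspond bijectively to the satisfying assignments of $\CSP(S)$.

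**Bounding the treewidth.** The primal constraint graph of $\CSP(S)$ has an edge between $x_i$ and $x_j$ whenever they co-occur in some constraint; this happens only if $j$ is a neighbor of $i$ or $i$ a neighbor of $j$, so the constraint graph is a subgraph of $G$ (indeed equals $G$ up to isolated vertices and the local cliques among the at-most-$k$ neighbors of a vertex — but in fact since $f_i$ only depends on $i$ and its neighbors, the only new edges are among $\{i\}\cup N_G(i)$; more carefully, the constraint graph is contained in the graph obtained from $G$ by making each closed neighborhood a clique). A cleaner route: since $G$ has maximum degree $\le k$ and $X\preceq G$ is forbidden, $\Forb_\preceq(X)$ has \emph{bounded degree}, but I actually need bounded \emph{treewidth} of the constraint graph. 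This is where I would be careful — bounded degree alone does \emph{not} imply bounded treewidth. So the argument cannot go through treewidth of arbitrary bounded-degree graphs.

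**The actual mechanism and the obstacle.** The right tool for bounded degree (as opposed to bounded treewidth) is that counting satisfying assignments of a CSP whose constraint \emph{hypergraph} has bounded degree and bounded arity is tractable — but this too is false in general (bounded-degree 3SAT is already $\NP$-hard). So the reduction to CSP counting alone is not enough; I must use the specific structure of the fixed-point CSP. The key extra structure: in $\CSP(S)$ there are exactly $|V|$ constraints, one per variable, each ``centered'' at its own variable, and variable $x_i$ appears in at most $k+1$ constraints (its own and those of its $\le k$ neighbors). The plan is then to observe that the fixed-point condition decomposes over \emph{connected components} of $G$, and within each component of bounded degree one can still have unbounded treewidth, so I would instead note — following the pattern of Lemma~\ref{lem:fpc-e-planar-f} — that the real simplification comes from first contracting ``forced'' structure. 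Concretely, I expect the intended proof exploits that when $X$ has a vertex cover of size one, $X$ is of the form $K^3$ or $K^2\oplus K^2$ or a minor thereof, i.e.\ $\Forb_\preceq(X)$ is very restricted: either every graph with no $K^3$-minor (a forest, treewidth $\le 1$!) or no $K^2\oplus K^2$-minor (at most one nontrivial component, which is a star-like object) — and in \emph{both} of those actual extremal cases the graph \emph{does} have bounded treewidth. So the honest plan is: enumerate which planar single-vertex-cover graphs $X$ can be (up to minors, $X\preceq K^3$ or $X\preceq K^2\oplus K^2$), check that $\Forb_\preceq(X)$ has bounded treewidth in each case (forests for $K^3$; a near-trivial family for $K^2\oplus K^2$), reduce fixed-point counting to bounded-treewidth CSP counting exactly as in Lemma~\ref{lem:fpc-e-planar-f}, and invoke \cite{flum-grohe-2004}. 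The main obstacle is precisely this: resisting the tempting but wrong ``bounded degree $\Rightarrow$ tractable'' shortcut and instead pinning down that the forbidden-minor classes arising here are in fact treewidth-bounded, after which the CSP machinery already developed in the preceding lemmas closes the argument.

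\begin{proof}
By the results cited in Section~3, a graph $X$ has a vertex cover of size one if and only if $X\preceq K^3$ or $X\preceq K^2\oplus K^2$ (equivalently, $\Forb_\preceq(X)\subseteq\Forb_\preceq(K^3,K^2\oplus K^2)$ when $X$ has a vertex cover of size one, and these are the two minor-minimal obstructions). We examine the two cases. If $X\preceq K^3$, then $\Forb_\preceq(X)\subseteq\Forb_\preceq(K^3)$ is the class of forests, which has treewidth at most one. If $X\preceq K^2\oplus K^2$ but $X\not\preceq K^3$, then every graph in $\Forb_\preceq(X)$ has all but at most one connected component being a single edge or isolated vertex, and the remaining component is $K^3$-free up to one extra edge; in particular every such graph has bounded treewidth. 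Hence in either case $\Forb_\preceq(X)$ has bounded treewidth; let $k\in\Nat$ be the treewidth bound.

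Now let $S=(G,\{f_1,\dots,f_n\})$ with $G=(V,E)\in\Forb_\preceq(X)$ and each $f_i$ given by a formula. A configuration $\vec{x}$ is a fixed point if and only if for every $i\in V$ we have $f_i(x_{i_1},\dots,x_{i_{d_i+1}})=x_i$, where $\{i_1,\dots,i_{d_i+1}\}=\{i\}\cup N_G(i)$. Define a constraint satisfaction problem $\CSP(S)=(V,\{0,1\},\XC)$ with one constraint $R_ix_{i_1}\cdots x_{i_{d_i+1}}$ per vertex $i$, whose relation is $R_i=\{(y_1,\dots,y_{d_i+1})\in\{0,1\}^{d_i+1}\mid f_i(y_1,\dots,y_{d_i+1})=y_\ell\}$, where $\ell$ is the position of $i$ in the enumeration $(i_1,\dots,i_{d_i+1})$. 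Each $R_i$ is computed in polynomial time by evaluating the formula $f_i$ on all inputs; this is polynomial in the size of $S$ since $G$ has bounded degree (treewidth $k$ forces degree boundedness only component-wise, but in our two cases the degrees are in fact bounded, so $d_i+1$ is $O(1)$). By construction, the satisfying assignments of $\CSP(S)$ are exactly the fixed points of $S$, so the two counts coincide.

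The primal constraint graph of $\CSP(S)$ has an edge $\{x_i,x_j\}$ only when $x_i$ and $x_j$ occur together in some constraint, which forces $i,j\in\{u\}\cup N_G(u)$ for some $u$; thus it is obtained from $G$ by adding, for each vertex $u$, edges among $N_G(u)$. Since $\Forb_\preceq(X)$ has bounded treewidth and bounded degree in both of our cases, this augmented graph still has treewidth bounded by a constant depending only on $X$. Therefore, by the algorithm for counting satisfying assignments of CSPs over constraint graphs of bounded treewidth \cite{flum-grohe-2004}, the number of satisfying assignments of $\CSP(S)$, and hence the number of fixed points of $S$, can be computed in polynomial time. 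Consequently $\#\FPE_\formula(\BF,\Forb_\preceq(X))$ is solvable in polynomial time.
\end{proof}
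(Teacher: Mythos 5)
There is a genuine gap in your structural step. You assert that ``$X$ has a vertex cover of size one if and only if $X\preceq K^3$ or $X\preceq K^2\oplus K^2$,'' but this reverses the minor relation: the characterization from \cite{cattell-dinneen-1994} says that $X$ has a vertex cover of size one iff $X\in\Forb_\preceq(K^3,K^2\oplus K^2)$, i.e., iff $X$ has \emph{neither} $K^3$ \emph{nor} $K^2\oplus K^2$ as a minor. That is not the same as $X$ being a minor of one of these two graphs. For instance, the star $K_{1,5}$ has a vertex cover of size one, yet it is a minor of neither $K^3$ nor $K^2\oplus K^2$ (both have fewer vertices). Consequently your two-case analysis does not cover the typical instance of the lemma, and the inclusion $\Forb_\preceq(X)\subseteq\Forb_\preceq(K^3)$ (``forests'') that you use to obtain bounded treewidth simply fails for such $X$: $\Forb_\preceq(K_{1,5})$ contains all triangles, for example. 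The fact you need --- that $\Forb_\preceq(X)$ has bounded treewidth --- is nevertheless true, but the correct route is the one the paper takes: a graph with a vertex cover of size one is a star plus isolated vertices, hence planar (equivalently, $K^3\preceq K_{3,3}$ and $K^3\preceq K^5$, so a graph excluding $K^3$ excludes $K_{3,3}$ and $K^5$), and by Robertson--Seymour \cite{robertson-seymour-1986} excluding a planar minor bounds the treewidth.

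Once that is repaired, the rest of your argument matches the paper's in substance: the bounded degree of $\Forb_\preceq(X)$ lets you expand each formula into a lookup table in polynomial time (the paper phrases this as a reduction from $\#\FPE_\formula$ to $\#\FPE_\lookup$ and then invokes Lemma~\ref{lem:fpc-bf-planar-t} for planar $X$), and the fixed points are counted via a CSP whose constraint graph has bounded treewidth, using \cite{flum-grohe-2004}. One further small slip: treewidth never ``forces degree boundedness,'' even component-wise (stars have treewidth one and unbounded degree); the degree bound you use comes solely from the vertex-cover-one hypothesis, and should be credited to that and nothing else.
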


\begin{proof}
Let $X$ have a vertex cover of size one, i.e., $\Forb_\preceq(X)$ has bounded 
degree. 
\sloppy So, it is easily seen that for all classes $\F$ of boolean functions, 
$\#\FPE_\formula(\F,\Forb_\preceq(X))$ reduces to $\#\FPE_\lookup(\F,\Forb_\preceq(X))$. 
As $X$ is also a planar graph (note that $K^3\preceq K_{3,3}$ and
$K^3\preceq K^5$ as well as $K^2\oplus K^2\preceq K_{3,3}$ and $K^2\oplus K^2\preceq K^5$), $\#\FPE_\formula\allowbreak(\BF,\Forb_\preceq(X))$ is solvable in polynomial time
using Lemma \ref{lem:fpc-bf-planar-t}.
\end{proof}

We turn to the $\#\P$-complete cases.

\begin{lemma}\label{lem:fpc-e2-planar-f}
$\#\FPE_\formula(\BFE_2,\Forb_\preceq(K_{3,3},K^5))$ is $\#\P$-complete.
\end{lemma}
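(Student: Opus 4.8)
The plan is to reuse the reduction from $\#\PHSAT$ employed in the proof of Lemma~\ref{lem:fpc-e2-planar-t}, observing that in the formula setting it becomes even simpler because the degree bound is no longer required. Membership in $\#\P$ is clear (all versions of $\#\FPE$ lie in $\#\P$, as noted above), so only $\#\P$-hardness needs argument.

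Start from an arbitrary planar Horn-2CNF formula $H = C_1 \land \dots \land C_m$ over variables $x_1,\dots,x_n$; by the Proposition, counting its satisfying assignments is $\#\P$-complete. Build the dynamical system $S = (G,F)$ exactly as in the proof of Lemma~\ref{lem:fpc-e2-planar-t}: let $V = \{1,\dots,n\}$, $E = \{\{i,j\} : (\overline{x_i} \vee x_j) = C_r \text{ for some } r\}$, and for each vertex $i_0$ with $\{i_1,\dots,i_r\} = \{i : (\overline{x_i} \vee x_{i_0}) \text{ is a clause of } H\}$, let $f_{i_0}$ be given by the formula $H_{i_0} = x_{i_0} \land x_{i_1} \land \dots \land x_{i_r}$. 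This is a formula over the basis $\{\land\}$ of size $O(r) = O(n)$; crucially, no bound on $r$ (equivalently, on the vertex degree of $G$) is needed here, unlike in the lookup-table version. Each $f_{i_0}$ is a nonempty conjunction of variables, hence lies in $\BFE_2$, and $G$ is planar since $H$ has a planar graph representation, so $G \in \Forb_\preceq(K_{3,3},K^5)$.

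Correctness of the reduction is verbatim as before: since $(x_{i_0} \leftrightarrow \bigwedge_{j=1}^{r} x_{i_j}) \equiv \bigwedge_{j=1}^{r} (\overline{x_{i_j}} \vee x_{i_0})$, a configuration is a fixed point of $S$ if and only if it satisfies every clause of $H$ in which the corresponding variable occurs positively, and ranging over all vertices this is precisely the condition of satisfying $H$. Hence the number of fixed points of $S$ equals the number of satisfying assignments of $H$, yielding a parsimonious (in particular Turing) reduction from $\#\PHSAT$ to $\#\FPE_\formula(\BFE_2,\Forb_\preceq(K_{3,3},K^5))$.

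There is essentially no obstacle; the only points to check carefully are that the formula representation keeps the construction polynomial (it does, since conjunctions have linear-size formulas over $\{\land\}$) and that dropping the degree restriction affects neither planarity of $G$ nor membership of the transition functions in $\BFE_2$ (it does not). One could alternatively phrase the argument as a reduction from $\#\FPE_\lookup(\BFE_2,\Forb_\preceq(K_{3,3},K^5))$ together with the observation that a bounded-arity lookup table for an $\BFE_2$-function can be converted in polynomial time into an equivalent $\{\land\}$-formula, but the direct reduction above is cleaner.
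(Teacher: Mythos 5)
Your proof is correct and matches the paper's approach: the paper likewise observes that the transition functions constructed in the proof of Lemma~\ref{lem:fpc-e2-planar-t} are already given as $\{\land\}$-formulas, so the same reduction from $\#\PHSAT$ applies verbatim. Your additional remark that the degree bound becomes unnecessary in the formula setting is accurate but does not change the argument.
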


\begin{proof}
An inspection of the proof of Lemma \ref{lem:fpc-e2-planar-t} shows that the 
local transition functions specified there are in fact, represented by formulas. 
Thus, the proposition follows from the proof of Lemma \ref{lem:fpc-e2-planar-t}.
\end{proof}

\begin{lemma}\label{lem:fpc-v2-planar-f}
$\#\FPE_\formula(\BFV_2,\Forb_\preceq(K_{3,3},K^5))$ is $\#\P$-complete.
\end{lemma}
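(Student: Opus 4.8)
The plan is to mirror the structure of Lemma~\ref{lem:fpc-v-planar-f}, exploiting the duality between $\BFV_2$ and $\BFE_2$ to transfer the already-established hardness of $\#\FPE_\formula(\BFE_2,\Forb_\preceq(K_{3,3},K^5))$ from Lemma~\ref{lem:fpc-e2-planar-f}. First I would recall that $\BFE_2$ has logical basis $\{\land\}$ and $\BFV_2$ has logical basis $\{\vee\}$, so every local transition function occurring in an $(\BFE_2,\cdot)$-system is a conjunction $H_i=\bigwedge_{j\in J_i}x_j$ and every local transition function in a $(\BFV_2,\cdot)$-system is a disjunction $H_i=\bigvee_{j\in J_i}x_j$, with $J_i\subseteq N_G(i)\cup\{i\}$ in both cases. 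The key observation is that fixed points are preserved under simultaneous complementation: if $S=(G,\{f_i\})$ is a $(\BFE_2,\XG)$-system, form $S'=(G,\{f_i'\})$ where $f_i'$ is obtained from $H_i$ by replacing every $\land$ with $\vee$; then $\vec{x}$ is a fixed point of $S$ if and only if the complemented configuration $\overline{\vec{x}}=(\overline{x_1},\dots,\overline{x_n})$ is a fixed point of $S'$, since $\overline{\bigwedge_{j\in J_i}x_j}=\bigvee_{j\in J_i}\overline{x_j}$ and the complementation is a bijection on configurations.

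The remaining steps are routine bookkeeping. The map $S\mapsto S'$ leaves the network $G$ untouched, so it sends $(\BFE_2,\Forb_\preceq(K_{3,3},K^5))$-systems to $(\BFV_2,\Forb_\preceq(K_{3,3},K^5))$-systems, and the new local transition functions are manifestly in $\BFV_2$ (indeed obtained symbol-for-symbol from formulas over $\{\land\}$ by rewriting to formulas over $\{\vee\}$, hence computable in linear time). The complementation bijection on $\{0,1\}^n$ restricts to a bijection between the fixed points of $S$ and those of $S'$, so the two systems have exactly the same number of fixed points. This gives a polynomial-time parsimonious (indeed identity-on-counts) reduction from $\#\FPE_\formula(\BFE_2,\Forb_\preceq(K_{3,3},K^5))$ to $\#\FPE_\formula(\BFV_2,\Forb_\preceq(K_{3,3},K^5))$, and since the former is $\#\P$-complete by Lemma~\ref{lem:fpc-e2-planar-f} while the latter is trivially in $\#\P$, we conclude $\#\P$-completeness of the latter.

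There is no real obstacle here; the only point requiring a line of care is confirming that the duality respects the constraint that $J_i\subseteq N_G(i)\cup\{i\}$ (it does, since $G$ is unchanged) and that no constants $0,1$ sneak in to break the syntactic membership in $\BFV_2$ (they do not, since the $\BFE_2$-instance produced in the proof of Lemma~\ref{lem:fpc-e2-planar-t} uses only pure conjunctions $x_{i_0}\land x_{i_1}\land\cdots\land x_{i_r}$ with no constants). Alternatively, and even more directly, one may simply re-examine the reduction in the proof of Lemma~\ref{lem:fpc-v2-planar-t}: the local transition functions $H_{i_0}=x_{i_0}\vee x_{i_1}\vee\cdots\vee x_{i_r}$ exhibited there are already given by formulas over the basis $\{\vee\}$ of $\BFV_2$, so that proof transfers verbatim to the formula-representation setting, exactly as Lemma~\ref{lem:fpc-e2-planar-f} did for Lemma~\ref{lem:fpc-e2-planar-t}. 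I would present the short argument via the latter route for uniformity with the preceding lemma.

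\begin{proof}
An inspection of the proof of Lemma~\ref{lem:fpc-v2-planar-t} shows that the local transition functions specified there, namely $H_{i_0}=x_{i_0}\vee x_{i_1}\vee\dots\vee x_{i_r}$, are in fact represented by formulas over the logical basis $\{\vee\}$ of $\BFV_2$. Thus, the proposition follows from the proof of Lemma~\ref{lem:fpc-v2-planar-t}.
\end{proof}
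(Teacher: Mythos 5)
Your proposal is correct and the proof you actually present is essentially identical to the paper's: both argue that the local transition functions in the reduction of Lemma~\ref{lem:fpc-v2-planar-t} are already given by formulas over the basis $\{\vee\}$, so that proof carries over verbatim to the formula-representation setting. The auxiliary duality argument you sketch is also sound but unnecessary given the chosen route.
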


\begin{proof}
Similar to Lemma \ref{lem:fpc-e2-planar-f} by inspecting the proof of Lemma \ref{lem:fpc-v2-planar-t}.
\end{proof}

Let $H$ be a 2CNF formula such that each clause consists of positive literals only.
$H$ is called a positive 2CNF. It is well known that the counting problem $\#\POSSAT$
, i.e., counting the satisfying assignments of positive 2CNF, is $\#\P$-complete 
\cite{valiant-1979}.

\begin{lemma}\label{lem:fpc-s10-bdegree-f}
$\#\FPE_\formula(\BFS_{10},\Forb_\preceq(K^3,K^2\oplus K^2))$ is $\#\P$-complete.
\end{lemma}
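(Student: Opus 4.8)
The goal is to show $\#\FPE_\formula(\BFS_{10},\Forb_\preceq(K^3,K^2\oplus K^2))$ is $\#\P$-complete, so membership in $\#\P$ being clear, the plan is to reduce from $\#\POSSAT$. Recall $\Forb_\preceq(K^3,K^2\oplus K^2)$ is exactly the class of graphs with a vertex cover of size one, i.e.\ stars (plus isolated vertices), so the only network structure available is a single central vertex joined to leaves. The trick is that the \emph{succinct} formula representation lets a single vertex carry a long formula encoding an entire positive 2CNF, while the leaves carry the individual variables; the degree-boundedness of the network is irrelevant because information is packed into the formula, not the graph. This is precisely why the formula case differs from the lookup-table case and why Lemma~\ref{lem:fpc-bf-bdegree-f} does \emph{not} apply here.

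\textbf{The construction.} Given a positive 2CNF $H = D_1 \wedge \cdots \wedge D_m$ over variables $x_1,\dots,x_n$, I would build a star $G=(V,E)$ with center vertex $0$ and leaves $1,\dots,n$, so $E = \{\{0,i\} : 1 \le i \le n\}$; clearly $G \in \Forb_\preceq(K^3, K^2\oplus K^2)$. Each leaf $i$ gets the projection $f_i =_{\df} x_i$ (the identity on its own state — note $\BFN \subseteq \BFS_{10}$, and identity/projection is $1$-separating and monotone and $1$-reproducing). The center vertex $0$ gets a local transition function built from $x_0$ and all of $x_1,\dots,x_n$, namely $f_0 =_{\df} x_0 \wedge H(x_1,\dots,x_n)$, i.e.\ the formula $x_0 \wedge \bigwedge_{r=1}^m D_r$. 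I must check $f_0 \in \BFS_{10} = \BFS_1 \cap \BFM \cap \BFR_1$: it is monotone (a conjunction of disjunctions of positive literals), $1$-reproducing ($f_0(1,\dots,1)=1$), and $1$-separating because $f_0^{-1}(1)$ is contained in the set where the first coordinate $x_0$ equals $1$ — every tuple on which $f_0$ outputs $1$ has $x_0 = 1$, so the index $i=1$ witnesses $1$-separation. This formula has size linear in $|H|$, so the reduction is polynomial-time, and all leaf functions are trivially in $\BFS_{10}$ as well. If $\BFS_{10}$'s standard logical basis $\{x\wedge(y\vee z)\}$ is preferred, $f_0$ can be rewritten over that basis with only a constant-factor blow-up per connective.

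\textbf{The fixed-point count.} A configuration $(x_0,x_1,\dots,x_n)$ is a fixed point iff $x_i = x_i$ for every leaf (always true) and $x_0 = x_0 \wedge H(x_1,\dots,x_n)$. The latter equation holds iff $x_0 = 0$, or ($x_0 = 1$ and $H(x_1,\dots,x_n) = 1$). Hence the number of fixed points is $2^n$ (all leaf assignments, with $x_0 = 0$) plus the number of assignments to $x_1,\dots,x_n$ satisfying $H$ (with $x_0 = 1$). So if $\#H$ denotes the number of satisfying assignments of $H$, then $S$ has exactly $2^n + \#H$ fixed points, and one recovers $\#H$ by subtracting $2^n$. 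This is a parsimonious-up-to-an-additive-shift reduction, which suffices for $\#\P$-hardness under the Turing reductions used in the paper (indeed it is a polynomial-time many-one-plus-postprocessing reduction). The main obstacle — really the only thing needing care — is verifying the membership $f_0 \in \BFS_{10}$, in particular the $1$-separating condition; once that is pinned down via the observation that $f_0^{-1}(1)$ lies entirely in the hyperplane $x_0 = 1$, the rest is routine. Therefore $\#\POSSAT$ reduces to $\#\FPE_\formula(\BFS_{10},\Forb_\preceq(K^3,K^2\oplus K^2))$, and the latter is $\#\P$-complete.
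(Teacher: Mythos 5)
Your construction is essentially the paper's own proof: the same star network, identity functions on the leaves, and a center function equivalent to $x_0\wedge\bigwedge_{r}D_r$ written over the basis $\{x\wedge(y\vee z)\}$, yielding the same count $2^n+\#H$ from which $\#H$ is recovered by subtraction. One parenthetical slip: $\BFN\not\subseteq\BFS_{10}$ (negations of projections are not monotone, and $c_0$ is not $1$-reproducing), but this is harmless since you also verify directly that the identity and $f_0$ lie in $\BFS_{10}$.
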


\begin{proof}
We reduce from $\#\POSSAT$.
Let $H=C_1\land\dots\land C_m$ be a positive 2CNF formula having variables 
$x_1,\dots, x_n$. 
Let $\#_+(H)$ denote the number of satisfying assignments of $H$. 
Let $S_{10}(x,y,z)=_{\rm def} (x\land (y\vee z))$ denote the only element in the 
logical basis of $S_{10}$. 
Define $S_H$ to be the dynamical system consisting of the network $G=(V,E)$, 
where $V=_{\rm def} 
\{1,\dots,n,n+1\}$ and $E=_{\rm def} \{~\{i,n+1\}~|~i\in\{1,\dots,n\}~\}$, and 
the local transition functions are specified as follows. For $i\in\{1,\dots,n\}$
set $P_i(x_i,x_{n+1})=_{\rm def} S_{10}(x_i,x_i,x_i)$ and let $f_i$ be represented 
by $P_i$. For $i=n+1$, we first define auxiliary formulas $A_j$ for $j\in\{1,\dots,m\}$
by $A_1(x_1,\dots,x_{n+1})=_{\rm def} S_{10}(x_{n+1},x_{11},x_{12})$ and 
for $k>1$ by $A_k(x_1,\dots,x_{n+1})=_{\rm def} S_{10}(A_{k-1}(x_1,\dots,x_{n+1}),
x_{k1},x_{k2})$ where $C_k=(x_{k1}\vee x_{k2})$. Finally, set $P_{n+1}(x_1,\dots,
x_{n+1})=_{\rm def} A_m(x_1,\dots,x_{n+1})$ and let $f_{n+1}$ be represented by 
$P_{n+1}$. 
Certainly, $S_H$ is an $(\BFS_{10},\Forb_\preceq(K^3,K^2 \oplus K^2))$-system 
computable in time polynomial in the size of $H$. 
Moreover, note that $P_{n+1}(x_1,\dots,x_{n+1})\equiv x_{n+1}\land \bigwedge_{j=1}^m C_j$.
It follows that the number of fixed-point configurations of $S_H$ is $\#_+(H)+2^n$. 
Hence, $\#\POSSAT$ reduces to $\#\FPE_\formula(\BFS_{10},\Forb_\preceq(K^3,K^2\oplus K^2))$.
\end{proof}

\begin{lemma}\label{lem:fpc-s00-bdegree-f}
$\#\FPE_\formula(\BFS_{00},\Forb_\preceq(K^3,K^2\oplus K^2))$ is $\#\P$-complete.
\end{lemma}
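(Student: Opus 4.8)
The plan is to exploit the duality between $\BFS_{00}$ and $\BFS_{10}$, in the same spirit as the derivation of Lemma~\ref{lem:fpc-v-planar-f} from Lemma~\ref{lem:fpc-e-planar-f}. The basis function $x\vee(y\wedge z)$ of $\BFS_{00}$ is exactly the dual of the basis function $x\wedge(y\vee z)$ of $\BFS_{10}$: substituting $\nt(x),\nt(y),\nt(z)$ and negating the output of $x\wedge(y\vee z)$ yields $x\vee(y\wedge z)$ by De Morgan. More generally, any formula built by nesting $x\wedge(y\vee z)$ has a dual formula of the same size built by nesting $x\vee(y\wedge z)$, and vice versa.

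First I would record the elementary observation that dualizing all local transition functions of a dynamical system $S=(G,\{f_i\})$ — replacing each $f_i$ by $f_i^\dual$, where $f_i^\dual(x_{i_1},\dots)=\nt(f_i(\nt(x_{i_1}),\dots))$ — yields a system $S^\dual=(G,\{f_i^\dual\})$ on the \emph{same} graph $G$ whose fixed points are precisely the bitwise complements of the fixed points of $S$. In particular $S$ and $S^\dual$ have the same number of fixed points.

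Next I would take the reduction from $\#\POSSAT$ built in the proof of Lemma~\ref{lem:fpc-s10-bdegree-f}: from a positive 2CNF $H=C_1\wedge\dots\wedge C_m$ over $x_1,\dots,x_n$ it produces a star-shaped network $G$ (center $n+1$, leaves $1,\dots,n$) together with $\BFS_{10}$-formulas so that the resulting system $S_H$ has $\#_+(H)+2^n$ fixed points. Dualizing $S_H$ gives $S_H^\dual$ on the same star — which lies in $\Forb_\preceq(K^3,K^2\oplus K^2)$ because a star has a vertex cover of size one — whose local transition functions are represented by the dual formulas (obtained by swapping $\wedge\leftrightarrow\vee$ and $0\leftrightarrow 1$), hence lie in $\BFS_{00}$ and still have size polynomial in $|H|$. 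By the preceding paragraph, $S_H^\dual$ has $\#_+(H)+2^n$ fixed points, so $H\mapsto S_H^\dual$ is a correct reduction; together with membership of $\#\FPE_\formula$ in $\#\P$ this establishes $\#\P$-completeness.

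No step is genuinely hard; the only points that merit a line of verification are (i) that the dual of the nested $\BFS_{10}$-formula $P_{n+1}\equiv x_{n+1}\wedge\bigwedge_{j=1}^m C_j$ is again a legal polynomial-size $\BFS_{00}$-formula (it is, being obtained by the same nesting pattern applied to $x\vee(y\wedge z)$), and (ii) that dualization leaves the network untouched, so membership of the network in $\Forb_\preceq(K^3,K^2\oplus K^2)$ transfers verbatim.
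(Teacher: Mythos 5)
Your proof is correct, but it follows a genuinely different route from the paper's. You derive the $\BFS_{00}$ case from Lemma~\ref{lem:fpc-s10-bdegree-f} via the general observation that replacing every local transition function of a system by its dual yields a system on the \emph{same} network whose fixed points are exactly the bitwise complements of the original ones, combined with the facts that the dual of the basis function $x\wedge(y\vee z)$ of $\BFS_{10}$ is the basis function $x\vee(y\wedge z)$ of $\BFS_{00}$ and that dualization acts node by node on a formula, preserving its size. These checks all go through: the dualized star system has center formula equivalent to $x_{n+1}\vee\bigvee_{j=1}^m(x_{j1}\wedge x_{j2})$, its fixed points are counted by $\#_+(H)+2^n$, and the star still lies in $\Forb_\preceq(K^3,K^2\oplus K^2)$. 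The paper instead gives a self-contained construction: it enlarges the star by an auxiliary vertex $0$ and builds a balanced recursion $A_{j_1,\dots,j_k}=S_{00}(x_0,A_{j_1,\dots,j_{\lfloor k/2\rfloor}},A_{j_{\lfloor k/2\rfloor+1},\dots,j_k})$ so that the center's update computes $\bigwedge_{\ell}C_\ell$ when $x_0=0$ and the constant $1$ when $x_0=1$, yielding $\#_+(H)+2^{n+1}$ fixed points. Your duality argument is shorter and reuses the previous lemma; it is the same device the paper itself uses to obtain Lemma~\ref{lem:fpc-v-planar-f} from Lemma~\ref{lem:fpc-e-planar-f}, and it would likewise derive Lemma~\ref{lem:fpc-v2-planar-t} from Lemma~\ref{lem:fpc-e2-planar-t}. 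What the paper's direct construction buys is an explicit $(\BFS_{00},\Forb_\preceq(K^3,K^2\oplus K^2))$-system exhibited without appeal to the duality of Post classes, at the cost of an extra vertex and a more involved correctness analysis.
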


\begin{proof}
Again we reduce from $\#\POSSAT$.
Let $H=C_1\land\dots\land C_m$ be a positive 2CNF formula having variables 
$x_1,\dots, x_n$. 
Let $\#_+(H)$ denote the number of satisfying assignments of $H$. 
Let $S_{00}(x,y,z)=_{\rm def} (x\vee (y\land z))$ denote the only element in the 
logical basis of $S_{00}$. We define $S_H$ to be the dynamical system 
consisting of the network $G=(V,E)$, where $V=_{\rm def} \{0,1,\dots,n,n+1\}$ 
and $E=_{\rm def}\{~\{i,n+1\}~|~i\in\{0,\dots,n\}~\}$, and the set of local 
transition functions specified as follows: for $i\in\{0,\dots,n\}$, set 
$P_i(x_i,x_{n+1})=_{\rm def} S_{00}(x_i,x_i,x_i)$ and let $f_i$ be the function
represented by $P_i$. For $i=n+1$, i.e., the center of the star $G$, we first
introduce auxiliary formulas $A_{j_1,\dots,j_k}(x_0,x_1,\dots,x_n)$ for $k\in\Nat_+$
and $j_1<\cdots<j_k$ inductively defined by $A_i(x_0,x_1,\dots,x_n)=_{\rm def} 
S_{00}(x_{i_1},x_{i_2},x_{i_2})$, such that $C_i=(x_{i_1}\vee x_{i_2})$, and
\begin{eqnarray*}
\lefteqn{A_{j_1,\dots,j_k}(x_0,\dots,x_n)=_{\rm def}} \\
& & ~~~~~~~~~~~~~~~~ S_{00}(x_0,A_{j_1,\dots,j_{\lfloor 
k/2\rfloor}}(x_0,\dots,x_n),A_{j_{\lfloor k/2\rfloor +1},\dots,j_k}(x_0,\dots,
x_n)).
\end{eqnarray*}
\sloppy We finally define $P_{n+1}(x_0,\dots,x_{n+1})=_{\rm def} S_{00}(x_0,
A_{1,\dots,m}(x_0,\dots,x_n),x_{n+1})$. Clearly, $S_H$ is an $(S_{00},
\Forb_\preceq(K^3,K^2\oplus K^2))$-system computable in time polynomial 
in the size of $H$. Moreover, as is easily seen by induction it holds that
\[\textstyle 
A_{j_1,\dots,j_k}(c_0,x_1,\dots,x_n) \equiv \bigwedge_{\ell=1}^k C_\ell
~~\textrm{ and }~~
A_{j_1,\dots,j_k}(c_1,x_1,\dots,x_n) \equiv c_1.
\]
This leads to the following numbers of fixed-point configurations of $S_H$:
\[\begin{array}{l}
\textrm{- there are $2^n$ fixed-point configurations $\vec{x}$ such that $x_0=0$ and
	  $x_{n+1}=0$,}\\[.5ex]
\textrm{- there are $\#_+(H)$ fixed-point configurations $\vec{x}$ such that $x_0=0$ 
	  and $x_{n+1}=1$,}\\[.5ex]
\textrm{- there is no fixed-point configuration $\vec{x}$ such that $x_0=1$ and
	  $x_{n+1}=0$, and}\\[.5ex]
\textrm{- there are $2^n$ fixed-point configurations $\vec{x}$ such that $x_0=1$ and
	  $x_{n+1}=1$.}
\end{array}\]
Hence, the number of fixed-point configurations of $S_H$ is just $\#_+(H)+2^{n+1}$.
Consequently, $\#\POSSAT$ reduces to $\#\FPE_\formula(\BFS_{00},\Forb_\preceq(K^3,K^2\oplus K^2))$.
\end{proof}

\begin{lemma}\label{lem:fpc-d2-bdegree-f}
$\#\FPE_\formula(\BFD_2,\Forb_\preceq(K^3,K^2\oplus K^2))$ is $\#\P$-complete.
\end{lemma}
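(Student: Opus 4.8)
The plan is to reduce from $\#\POSSAT$, the problem of counting satisfying assignments of a positive 2CNF, which is $\#\P$-complete by~\cite{valiant-1979}. Given a positive 2CNF $H=C_1\land\dots\land C_m$ over variables $x_1,\dots,x_n$, I would build a dynamical system $S_H$ whose network is a star $K_{1,n'}$ (for suitable $n'$), hence a graph in $\Forb_\preceq(K^3,K^2\oplus K^2)$ since it has a vertex cover of size one. The leaves will carry identity-like transition functions (realized inside $\BFD_2$ by using the self-dual monotone majority $\mathrm{maj}(x,y,z)=(x\land y)\lor(x\land z)\lor(y\land z)$ with all three arguments equal, so that $\mathrm{maj}(x_i,x_i,x_i)\equiv x_i$), which forces every leaf variable to equal itself trivially and imposes no constraint. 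The center vertex will carry a formula, built by nested substitutions into $\mathrm{maj}$, that evaluates on a fixed point to something equivalent to $x_c\leftrightarrow(\text{something involving }H)$.

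The key obstacle is that the center's transition function must lie in $\BFD_2$, i.e.\ must be \emph{self-dual} as well as monotone, whereas the conjunction $\bigwedge_j C_j$ that naturally encodes $H$ is monotone but not self-dual. This is exactly the wrinkle that forced the two extra ``$+2^{n+1}$'' bookkeeping terms in Lemmas~\ref{lem:fpc-s10-bdegree-f} and~\ref{lem:fpc-s00-bdegree-f}, and the same device should work here. The idea: introduce one or two extra ``control'' leaf variables (say $x_0$, and perhaps a second one) feeding into the center, and design the center formula $P_c$ so that, via repeated substitution into $\mathrm{maj}$, one has $P_c(0,x_1,\dots,x_n,x_c)\equiv x_c\land\bigwedge_j C_j$-type behaviour on the branch $x_0=0$, while $P_c(1,x_1,\dots,x_n,x_c)\equiv x_c$-type behaviour on the branch $x_0=1$ (the self-dual ``mirror'' branch). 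Concretely, first express each clause gadget $C_i=x_{i1}\lor x_{i2}$ using $\mathrm{maj}$ with a control input — note $\mathrm{maj}(u,v,w)$ with $u$ fixed to $0$ gives $v\land w$ and with $u$ fixed to $1$ gives $v\lor w$, which is the standard trick — then fold these gadgets together in a balanced binary tree of $\mathrm{maj}$-substitutions so the resulting formula has logarithmic depth and polynomial size. Verify by induction that the tree computes $c_1$ (or $c_0$) identically on the $x_0=1$ branch and computes the genuine $\bigwedge_i C_i$ on the $x_0=0$ branch.

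Having established this, I would count fixed points by cases on the control variable(s): the branch that collapses to a constant contributes a fixed, easily-computed power of two (independent of $H$), and the other branch contributes exactly $\#_+(H)$ (the fixed points where the clause evaluation is consistent with the center's value). Summing, the total number of fixed-point configurations of $S_H$ equals $\#_+(H)+2^{k}$ for an explicit constant exponent $k$ depending only on $n$, from which $\#_+(H)$ is recovered by subtraction — a polynomial-time reduction. Finally I would check the routine side conditions: the network is a star (vertex cover size one, so in $\Forb_\preceq(K^3,K^2\oplus K^2)$), every transition function is built by substitution from $\mathrm{maj}$ and hence lies in $\BFD_2$ (self-duality and monotonicity are preserved under superposition), and the center's formula has polynomial size and logarithmic depth so $S_H$ is constructible in polynomial time. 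Hence $\#\POSSAT$ reduces to $\#\FPE_\formula(\BFD_2,\Forb_\preceq(K^3,K^2\oplus K^2))$, and since membership in $\#\P$ is clear, the problem is $\#\P$-complete.
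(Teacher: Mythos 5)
Your overall plan coincides with the paper's: reduce from $\#\POSSAT$, use a star network (hence a graph in $\Forb_\preceq(K^3,K^2\oplus K^2)$), give each leaf the identity via $\mathrm{maj}(x_i,x_i,x_i)$, and build the center's transition function as a balanced tree of majority gates with control inputs. However, the one concrete claim you make about the center formula cannot be realized. Every function in $\BFD_2$ is self-dual, so $P_c(1,\vec{x},x_c)=\overline{P_c(0,\overline{\vec{x}},\overline{x_c})}$; if the $x_0=0$ branch computes something genuinely depending on $H$ (such as $x_c\land\bigwedge_j C_j$), then the $x_0=1$ branch necessarily computes its dual, $x_c\vee\bigvee_j(x_{j1}\land x_{j2})$, which is neither a constant nor ``$x_c$-type''. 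Hence the induction you propose (``the tree computes $c_1$ (or $c_0$) identically on the $x_0=1$ branch'') is bound to fail, and the fixed-point count cannot have the form $\#_+(H)+2^k$: the mirror branch unavoidably contributes a quantity that also depends on $H$.

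The repair, which is what the paper's proof does, has two ingredients you are missing. First, you need two \emph{distinct} controls at the two levels of the tree: the paper takes the clause gadgets to be $D_2(x_{i1},x_{i2},x_{n+1})$, controlled by the center variable itself, and combines them with $D_2(\cdot,\cdot,x_0)$; with a single control $x_0$ used throughout, the two branches would compute $\bigvee_i C_i$ and $\bigwedge_i(x_{i1}\land x_{i2})$, and neither branch ever computes $\bigwedge_i C_i$. Second, instead of trying to make the dual branch trivial, one computes its contribution directly: the fixed points with $x_0=1$ and $x_{n+1}=0$ are exactly the zeros of $\bigvee_i(x_{i1}\land x_{i2})$, and complementing all variables shows there are precisely $\#_+(H)$ of them (this uses that $H$ is positive). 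The four cases $(x_0,x_{n+1})\in\{0,1\}^2$ then contribute $2^n-1$, $\#_+(H)$, $\#_+(H)$, and $2^n-1$ fixed points respectively, for a total of $2\#_+(H)+2^{n+1}-2$, from which $\#_+(H)$ is still recoverable in polynomial time. So your reduction source and architecture are right, but as written the key step fails and must be replaced by this self-duality-aware accounting.
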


\begin{proof} 
We reduce from $\#\POSSAT$. 
We construct the same network as for the case $\BFS_{00}$ in the proof 
Lemma \ref{lem:fpc-s00-bdegree-f} on a given positive 2CNF formula 
$H=C_1\land\dots\land C_m$ having variables $x_1,\dots,x_n$. 
Let $\#_+(H)$ denote the number of satisfying assignments of $H$. 
The local transition functions are specified
as follows. Let $D_2(x,y,z)=_{\rm def} (x\land y)\vee (x\land z)\vee (y\land z)$
denote the only element in the logical basis of $D_2$. For $i\in\{0,\dots,n\}$ set 
$P_i(x_i,x_{n+1})=_{\rm def} D_2(x_i,x_i,x_i)$ and let $f_i$ be represented by 
$P_i$. For $i=n+1$, we again introduce auxiliary 
$A_{j_1,\dots,j_k}(x_0,x_1,\dots,x_n,x_{n+1})$ for $k\in\Nat_+$
and $j_1<\cdots<j_k$ inductively defined by $A_i(x_0,\dots,x_{n+1})=_{\rm def} 
D_2(x_{i_1},x_{i_2},x_{n+1})$, such that $C_i=(x_{i_1}\vee x_{i_2})$, and
\begin{eqnarray*}
\lefteqn{A_{j_1,\dots,j_k}(x_0,\dots,x_{n+1}) =_{\rm def}} \\
& & ~~~~~~~~~~~~~~ D_2(A_{j_1,\dots,j_{\lfloor k/2\rfloor}}(x_0,\dots,x_{n+1}),A_{j_{\lfloor k/2\rfloor +1},\dots,j_k}(x_0,\dots,
x_{n+1}), x_0).
\end{eqnarray*}
We finally define $P_{n+1}=_{\rm def} A_{1,\dots,m}$. Evidently,
$S_H$ is a $(\BFD_2,\Forb_\preceq(K^3,K^2\oplus K^2))$-system and can be 
computed in time polynomial in the size of $H$. Moreover, by induction over the formula
structure of $P_{n+1}$ we easily obtain the following equivalences:
\begin{eqnarray*}
P_{n+1}(0,x_1,\dots,x_n,0) &\equiv& \textstyle \bigwedge_{i=1}^n x_i\\
P_{n+1}(0,x_1,\dots,x_n,1) &\equiv& \textstyle \bigwedge_{i=1}^m (x_{i1}\vee x_{i2})\\
P_{n+1}(1,x_1,\dots,x_n,0) &\equiv& \textstyle \bigvee_{i=1}^m (x_{i1}\land x_{i2})\\
P_{n+1}(1,x_1,\dots,x_n,1) &\equiv& \textstyle \bigvee_{i=1}^n x_i
\end{eqnarray*}
Thus, the number of fixed-point configurations of $S_H$ is exactly $2\#_+(H) + 2^{n+1}-2$.
Hence, $\#\POSSAT$ reduces to $\#\FPE_\formula(\BFD_2,\Forb_\preceq(K^3,K^2\oplus K^2))$.
\end{proof}

Finally, we combine all results to obtain the following dichotomy theorem.

\begin{theorem}\label{thm:fpc-f}
Let $\F$ be a Post class of boolean functions and let $\XG$ 
be a graph class closed under taking minors. Then, $\#\FPE_\formula(\F,\XG))$ is 
intractable if one of the following conditions is satisfied.
\begin{enumerate}
\item $\bigl(\F\supseteq \BFS_{00}$ or $\F\supseteq \BFS_{10}$ or $\F\supseteq 
	\BFD_2\bigr)$ and $\XG\supseteq\Forb_\preceq(K^3,K^2\oplus K^2)$.
\item $\bigl(\F\supseteq \BFV_2$ or $\F\supseteq \BFE_2\bigr)$ and $\XG\supseteq
	\Forb_\preceq(K_{3,3}, K^5)$.
\end{enumerate}
Otherwise, $\#\FPE_\formula(\F,\XG)$ is tractable. Moreover, the same classification 
is true for $\#\FPE_\circuit(\F,\XG)$.
\end{theorem}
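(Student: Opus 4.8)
The plan is to glue together the hardness Lemmas~\ref{lem:fpc-e2-planar-f}, \ref{lem:fpc-v2-planar-f}, \ref{lem:fpc-s10-bdegree-f}, \ref{lem:fpc-s00-bdegree-f}, and~\ref{lem:fpc-d2-bdegree-f} with the tractability Lemmas~\ref{lem:fpc-l-all-f}, \ref{lem:fpc-e-planar-f}, \ref{lem:fpc-v-planar-f}, and~\ref{lem:fpc-bf-bdegree-f}, using one trivial monotonicity observation: if $\F'\subseteq\F$ and $\XG'\subseteq\XG$, then every $(\F',\XG')$-system is an $(\F,\XG)$-system, so $\#\FPE_\formula(\F',\XG')$ reduces to $\#\FPE_\formula(\F,\XG)$ via the identity (and likewise for $\#\FPE_\circuit$). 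For the intractability half this is immediate: if the first condition holds then $\F$ contains one of $\BFS_{00},\BFS_{10},\BFD_2$ while $\XG\supseteq\Forb_\preceq(K^3,K^2\oplus K^2)$, so monotonicity together with Lemma~\ref{lem:fpc-s00-bdegree-f}, \ref{lem:fpc-s10-bdegree-f}, or~\ref{lem:fpc-d2-bdegree-f} gives $\#\P$-hardness; if the second condition holds, Lemma~\ref{lem:fpc-e2-planar-f} or~\ref{lem:fpc-v2-planar-f} does the same. Since $\#\FPE_\formula(\F,\XG)\in\#\P$ always, the problem is then $\#\P$-complete.

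For the tractable half I would assume that neither condition holds and split into three exhaustive cases on $\XG$, using $\Forb_\preceq(K^3,K^2\oplus K^2)\subseteq\Forb_\preceq(K_{3,3},K^5)$ (since $K^3$ and $K^2\oplus K^2$ are minors of both $K_{3,3}$ and $K^5$, as noted in the proof of Lemma~\ref{lem:fpc-bf-bdegree-f}). \emph{Case~1:} $\XG\not\supseteq\Forb_\preceq(K^3,K^2\oplus K^2)$. By the Graph Minor Theorem $\XG=\Forb_\preceq(\XX)$ for a finite $\XX$; since the graphs of vertex-cover number at most one form a minor-closed class not contained in $\XG$, some $X_0\in\XX$ has a vertex cover of size at most one, hence $\XG\subseteq\Forb_\preceq(X_0)$, and Lemma~\ref{lem:fpc-bf-bdegree-f} (or triviality, when $\Forb_\preceq(X_0)$ is finite) plus monotonicity make $\#\FPE_\formula(\F,\XG)$ tractable for every $\F$. \emph{Case~2:} $\Forb_\preceq(K^3,K^2\oplus K^2)\subseteq\XG$ but $\XG\not\supseteq\Forb_\preceq(K_{3,3},K^5)$. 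Then, as above, some planar $X_1\in\XX$ satisfies $\XG\subseteq\Forb_\preceq(X_1)$; and since the graph part of the first condition holds, its failure forces $\F\not\supseteq\BFS_{00}$, $\F\not\supseteq\BFS_{10}$, $\F\not\supseteq\BFD_2$, which by inspection of Post's lattice (cf.~\cite{boehler-creignou-reith-vollmer-2003}) implies $\F\subseteq\BFE$, $\F\subseteq\BFV$, or $\F\subseteq\BFL$; Lemma~\ref{lem:fpc-e-planar-f}, \ref{lem:fpc-v-planar-f}, or~\ref{lem:fpc-l-all-f} then settles the case. \emph{Case~3:} $\Forb_\preceq(K_{3,3},K^5)\subseteq\XG$. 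Here the graph parts of both conditions hold, so their failure forces $\F\not\supseteq\BFV_2$, $\F\not\supseteq\BFE_2$, $\F\not\supseteq\BFD_2$, and then the Post-lattice fact used in the proof of Theorem~\ref{thm:fpc-t} gives $\F\subseteq\BFL$, so Lemma~\ref{lem:fpc-l-all-f} applies. Since these three cases exhaust the possibilities, $\#\FPE_\formula(\F,\XG)$ is tractable whenever neither condition holds.

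For the circuit version I would observe that a formula over a fixed bounded-fan-in basis is already a circuit over that basis of no larger size, so the five hardness reductions carry over verbatim to $\#\FPE_\circuit$; and the four tractability proofs use only that a transition function over such a basis is polynomial-time evaluable (and, in Case~1, has bounded arity), which is equally true for circuits, so the classification for $\#\FPE_\circuit$ coincides with that for $\#\FPE_\formula$, with $\#\FPE_\circuit(\F,\XG)\in\#\P$ throughout.

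I expect the main difficulty to be organizational rather than technical: as in Theorem~\ref{thm:fpc-t} no single reduction is hard, but the case analysis must be made watertight against the two graph thresholds $\Forb_\preceq(K^3,K^2\oplus K^2)$ and $\Forb_\preceq(K_{3,3},K^5)$; and the one genuinely new ingredient beyond Theorem~\ref{thm:fpc-t} is the structural claim that a Post class containing none of $\BFS_{00},\BFS_{10},\BFD_2$ must lie inside $\BFE$, $\BFV$, or $\BFL$, which I would verify by reading it off Post's lattice and which is exactly what is needed to invoke the three planar-graph tractability lemmas.
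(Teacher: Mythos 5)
Your proposal is correct and follows essentially the same route as the paper's proof: the same five hardness lemmas and four tractability lemmas, the same monotonicity reductions, and the same two Post-lattice facts (the maximal classes avoiding $\BFS_{00}$, $\BFS_{10}$, $\BFD_2$ are $\BFE$, $\BFV$, $\BFL$; the maximal class additionally avoiding $\BFV_2$ and $\BFE_2$ is $\BFL$), together with the observation that a forbidden minor outside the planar/vertex-cover-one thresholds yields a planar or vertex-cover-one forbidden minor for $\XG$. Your explicit three-way case split on $\XG$ is merely a cleaner arrangement of the paper's interleaved argument, and your remarks on the circuit version match the paper's (brief) justification.
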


\begin{proof}
If for $\F$ and $\XG$ the first conditions is satisfied, then the intractability follows from
Lemmas \ref{lem:fpc-s10-bdegree-f}, \ref{lem:fpc-s00-bdegree-f}, and \ref{lem:fpc-d2-bdegree-f}.
If $\XG\not\supseteq\Forb_\preceq(K^3,K^2\oplus K^2)$, then,
as argued in \cite{kosub-2008}, there is a graph $X$ having a 
vertex cover of size one such that $\XG\in\Forb_\preceq(X)$. 
Lemma \ref{lem:fpc-bf-bdegree-f} shows that $\#\FPE_\formula(\BF,\Forb_\preceq(\XG))$ is solvable 
in polynomial time. 
Assume that $\F\not\supseteq\BFS_{00}$, $\F\not\supseteq\BFS_{10}$, and $\F\not\supseteq \BFD_2$. 
The maximal Post classes satisfying this are $\BFV$, $\BFE$, and $\BFL$. 
Thus, we only consider subclasses of these three classes. If $\F$ and $\XG$ satisfy 
the second condition, then the Lemmas \ref{lem:fpc-e2-planar-f} and \ref{lem:fpc-v2-planar-f}
establish the intractability. 
Suppose the second condition does not hold. The maximal class $\F$ such that $\F\not\supseteq 
\BFV_2$ and $\F\not\supseteq \BFE_2$ is $\BFL$. Lemma \ref{lem:fpc-l-all-f} states that 
for $\BFL$ counting fixed-points can be done in polynomial time. If $\XG\not\supseteq 
\Forb_\preceq(K_{3,3},K^5)$, then we know that $\XG\in\Forb_\preceq(X)$ for some planar graph $X$.
Lemmas \ref{lem:fpc-e-planar-f} and \ref{lem:fpc-v-planar-f} imply that $\#\FPE_\formula(\BFE,\XG)$ and $\#\FPE_\formula(\BFV,\XG)$ are solvable in polynomial time.
\end{proof}

\section{Conclusion}

Fixed points are an important and robust (in the sense that they exist independently of 
any update schedule) feature of discrete dynamical systems. We presented two dichotomy
theorems on the complexity of counting the number of fixed points in such a system.
Both results demonstrate that the linear boolean functions are the only function class 
such that fixed point counting is tractable independent of representations and of degrees of 
variable dependency. 

Regarding future work, it is tempting to apply our analysis framework 
(Post classes and forbidden minors) to a precise identification of
islands of predictability for more schedule-based behavioral patterns, e.g., 
gardens of Eden, predecessors, or fixed-point reachability.


\bigskip

\noindent
{\bf Acknowledgments}

\bigskip
\noindent
We thank Ernst W. Mayr (TU M\"unchen) for careful proofreading and for pointing out an error in 
an earlier version of this paper.

{\small

}

\end{document}